\DeclareMathOperator{\const}{const}
\DeclareMathOperator{\Sp}{Sp}
\DeclareMathOperator{\linearspan}{span}
\DeclareMathOperator{\sgn}{sgn}
\newtheorem{theorem}{Theorem}
\newtheorem{lemma}{Lemma}
\theoremstyle{definition}
\theoremstyle{remark}
\newtheorem{remark}{Remark}
\author{I.G. Korepanov}
\title{Bosonic pentachoron weights and \goodbreak multiplicative 2-cocycles}
\date{March--October 2017}
\begin{document}

\sloppy

\maketitle

\begin{abstract}
Gaussian pentachoron weights can be used for constructing algebraic realizations of four-dimen\-sional Pachner moves. Here, we consider a natural `gauge equivalence' for such weights with one and two bosonic---i.e., commuting---variables on 3-faces. For the one-boson case, all generic weights turn out to be gauge equivalent. For the two-boson case, and generic weights, their gauge equivalence classes are parameterized by multiplicative 2-cocycles. Moreover, a generic two-boson weight can be reduced by a gauge transformation to a delta-function form.
\end{abstract}

\section{Introduction}\label{s:i}

The aim of this paper is to shed some light on the algebraic nature of `Gaussian pentachoron weights' that can be used in constructing four-dimen\-sional topological quantum field theories. As explained below, the word `bosonic' in the title simply means that we are using \emph{commuting} variables, in contrast with the `fermionic' case and anti-commuting variables in papers~\cite{full-nonlinear,fermionic-1,fermionic-2}.

\subsection{Pentachoron weights and algebraic realizations of four-dimen\-sional Pachner moves}\label{ss:intro-1}

Pentachoron weight is an algebraic expression put in correspondence to a pentachoron (\,=\,4-simplex) in a triangulation of a piecewise linear (PL) four-manifold. These weights must be able to compose in some way when pentachora are glued together, so as a quantity belonging to the whole manifold could be obtained from them. This quantity can be called \emph{invariant} if it does not depend on a specific triangulation. As two triangulations can be transformed one into the other by a sequence of \emph{Pachner moves}~\cite{Pachner,Lickorish}---elementary triangulation re-buildings---there must exist some algebraic equalities involving pentachoron weights (or, more generally, $n$-simplex weights in the case of $n$-manifolds), corresponding naturally to Pachner moves. Such equalities are called \emph{algebraic realizations} of Pachner moves, or simply \emph{Pachner move relations}.

\subsection{Gaussian exponentials as pentachoron weights}\label{ss:intro-2}

A simple and natural idea is to try pentachoron weights in the form of \emph{Gaussian exponentials}---exponentials of quadratic forms---of some variables living on the faces of a pentachoron. Let these faces be \emph{3-faces}, i.e., tetrahedra. Composition of pentachoron weights corresponding to gluing pentachora together along some 3-faces will then be described by an integral w.r.t.\ the variables living on these 3-faces.

This idea works well in the \emph{fermionic} case, that is, with the mentioned variables being \emph{anticommutative} and belonging to a \emph{Grassmann algebra} (and the integral being, accordingly, the \emph{Berezin integral}). Moreover, fascinating mathematical structures have been unveiled in the fermionic case, namely, \emph{nonlinear} parameterization of the `essential part' of the mentioned quadratic forms in terms of the manifold's middle (simplicial) cohomologies~\cite{full-nonlinear,fermionic-1,fermionic-2}.

\subsection{Annihilating differential operators}\label{ss:intro-3}

We studied fermionic Gaussian exponentials (or, as we called them, \emph{Grassmann--Gaussian exponentials}) by means of their \emph{annihilating differential operators}. A Gaussian exponential~$\mathcal W$ depending on $n$ variables $x_1,\ldots, x_n$ is determined, to within a constant factor, by a system of differential equations of the following form:
\begin{equation}\label{eo:d}
\begin{cases}
\left(\dfrac{\partial}{\partial x_1} + l_1(x_1,\ldots, x_n)\right) \mathcal W=0,\\
\dotfill \\
\left(\dfrac{\partial}{\partial x_n} + l_n(x_1,\ldots, x_n)\right) \mathcal W=0,
\end{cases}
\end{equation}
where $l_1,\ldots,l_n$ are \emph{linear} forms of our~$x$'s.

We assume that each~$x_i$, \ $i=1,\ldots,n$, belongs to a certain 3-face~$t$ of the given pentachoron~$u$. In~\cite{full-nonlinear}, there was exactly one Grassmanian variable~$x_t$ on each~$t$, so $n=5$ for any pentachoron. In the present paper, we consider `usual' commutative variables, and two different cases: either one variable on each~$t$, or two variables on each~$t$ ($n=10$ in the latter case).

\subsection{Edge operators}\label{ss:intro-4}

The main tool in our work was \emph{edge operators}---a particular case of differential operators annihilating a pentachoron weight. Let $b\subset u$ be an edge belonging to pentachoron~$u$. Suppose we can make up such a linear combination of the operators in big parentheses in~\eqref{eo:d} that it contains (both in its differential part and in the linear form) only the $x$'s lying on the \emph{three tetrahedra containing~$b$}. Then this linear combination is called an edge operator corresponding to~$b$.

Fermionic edge operators appeared in paper~\cite{full-nonlinear}, and their key role was determined by the two following facts. First, there are of course linear dependencies between edge operators, and studying these dependencies reveals their underlying simplicial \emph{2-cocycle}~\cite[Subsection~3.3]{full-nonlinear}. Such cocycles parameterize, to within \emph{gauge transformations} (of which we speak below), all generic Grassmann--Gaussian exponents of the kind investigated in~\cite{full-nonlinear}, that is, with \emph{one} Grassmann variable~$x_t$ living on each triangulation tetrahedron (\,=\,3-face)~$t$.

Second, edge operators add together nicely when pentachora are glued together, and this was the basis for the proof of the Pachner move 3--3 relation in~\cite[Section~5]{full-nonlinear}.

\subsection{Gauge equivalence of pentachoron weights}\label{ss:intro-5}

Differential operators $\partial / \partial x_i$ and operators of multiplication by the variables~$x_i$ generate an algebra, called \emph{Clifford algebra} in the fermionic case and \emph{Heisenberg algebra} in the bosonic case. Such algebras have rich automorphism groups. In particular, there are automorphisms involving only the variable(s) on one single tetrahedron. Such automorphisms generate the subgroup of \emph{gauge transformations} in the automorphism group (for details, see~\cite[Definition~9]{full-nonlinear} in the fermionic case, or Subsection~\ref{ss:1g} and the beginning of Section~\ref{s:2} in the present paper in the bosonic case).

It turns out that gauge transformations preserve algebraic Pachner move relations. To be more exact, if two pentachora $u_1$ and~$u_2$ have a common 3-face~$t$ (that is, they have been glued along this 3-face), and if a transformation~$\tau$ involving only the variable(s) on~$t$ has been done for~$u_1$, then the transformation $\sigma\tau\sigma^{-1}$ must be done for~$u_2$. Here $\sigma$ changes the signs of all the differentiations, but leaves intact all multiplication operators:
\[
\sigma(\partial / \partial x_i) = -\partial / \partial x_i, \qquad \sigma(x_i) = x_i.
\]
If, on the other hand, a 3-face~$t$ is a \emph{boundary} face of both the initial and the final configurations of a Pachner move, then simply the same arbitrary gauge transformation, involving only the variable(s) on~$t$, can always be done for both configurations.

\subsection{Bosonic Gaussian exponentials and integrals}\label{ss:intro-6}

If we believe that (roughly speaking) whatever can be done for fermions, can be done for bosons as well (and vice versa), it makes sense to realize the ideas of~\cite{full-nonlinear} also in a bosonic setting. Here we need \emph{commuting} variables, for instance, simply \emph{real} variables, and we can use usual integrals over all real values of each variable for gluing pentachora together. That is, if we are gluing two pentachora (or two clusters of pentachora) along a 3-face~$t$ where the variable~$x_t$ lives, and their weights are $\mathcal W_1$ and~$\mathcal W_2$, then the resulting weight will be
\[
\int_{-\infty}^{\infty} \mathcal W_1 \mathcal W_2 \,\mathrm dx_t.
\]

It turns out more convenient to use \emph{formal} integrals, avoiding possible problems with their divergence. The value ascribed to such integral is obtained by the analytic continuation from those values of the relevant parameter where the integral converges. Actually, we must come to terms with the fact that there are \emph{two} values of the integral, differing in sign, due to the two-valuedness of the square root, as in the following example:
\begin{equation}\label{eo:int}
\int \exp(-ax^2)\,\mathrm dx\, \stackrel{\mathrm{def}}{=} \, \sqrt{\pi/a}\qquad \text{for any complex \ }a\ne 0.
\end{equation}
We need complex numbers in this work, because complex transformations of quadratic forms will be essential in Section~\ref{s:2}.

As we will see in Subsections \ref{ss:eo:cw} and~\ref{ss:eo:ar}, such integrals work well for Pachner moves 3--3, even though \emph{delta functions} may appear during the integration, as in the following example:
\begin{equation}\label{eo:ea}
\int \exp(axy)\,\mathrm dy = 2\pi \delta(\sqrt{-1}\,ax).
\end{equation}

We do not consider algebraic realizations of other (than 3--3) Pachner moves in this paper. It must be admitted, however, that these other moves cannot be treated in the same straightforward way, because there appear integrals that do not exist even in the mentioned formal sense (like $\int \mathrm dx$, or integral involving a delta function squared). Such integrals will require some `regularization'; a similar problem was solved in the fermionic case by introducing a \emph{chain complex}~\cite{fermionic-1} instead of just one matrix of a quadratic form. One more possible way is described below.

\subsection{Discrete bosonic variables}\label{ss:intro-7}

All problems with integrals happily disappear if our bosonic variables are `discrete'---take only a finite number of values. For instance, they may belong to a \emph{finite field}~$\mathbb F$. The exponential can then be replaced by any homomorphism
\[
\mathrm e\colon\;\,\mathbb F \to \mathbb C^*
\]
from the \emph{additive} group of~$\mathbb F$ into the \emph{multiplicative} group of~$\mathbb C$, and the integral---by the sum over all elements in~$\mathbb F$ for the relevant variable. This idea has been already realized in paper~\cite{cubic}, where the quadratic forms in Gaussian exponents were found using reductions $\mod 2$ of some formulas written initially for real variables---actually, some variation of formulas~\eqref{6b} below. 

Note that a similar construction---using the rings~$\mathbb Z_n$ instead of our~$\mathbb F$---has been succesfully realized in paper~\cite{kashaev}, where its author managed to build some realizations of all four-dimen\-sional Pachner moves. We would like to call those realizations `constant'---not depending on parameters (except for~$n$), in contrast with the case of~\cite{full-nonlinear}, where the realization depends on an arbitrary 2-cocycle. We will present here strong evidence that this constantness was not accidental: parameters, also in the form of a 2-cocycle---but this time \emph{multiplicative}---appear only when the number of bosons living at each 3-face is \emph{doubled}.

\subsection{What we do in this paper}\label{ss:intro-8}

As we have stated, we like to describe pentachoron weights in terms of their annihilating operators, and these are divided into gauge equivalence classes. Hence, it is natural to study pentachoron weights \emph{up to the gauge equivalence}, and this is what we are doing below.

We restrict ourself to the \emph{complex} field~$\mathbb C$ (not just~$\mathbb R$, because we want square roots to be always available; this will be especially needed in our Lemma~\ref{l:d}). Interestingly, our present work actually serves for the `discrete' case as well: as we have said already, our formulas---or their small variations---can admit meaningful reductions modulo a finite characteristic~\cite{cubic}.

Also, we always consider---and classify---\emph{generic} pentachoron weights. This looks, of course, especially natural for the field~$\mathbb C$, but we hope that our work can shed some light upon the finite characteristic case as well.

We mainly analyze here (gauge equivalence classes of) Gaussian exponentials for just one pentachoron. The number of `bosonic' complex variables at each 3-face is either one or two. For the one-boson case, all generic weights turn out to be gauge equivalent, and this suggests an explanation of the constantness of Pachner move realizations in~\cite{kashaev}. For the two-boson case, and generic weights, their gauge equivalence classes are parameterized by \emph{multiplicative} 2-cocycles. Moreover, a generic two-boson weight can be reduced by a gauge transformation to a simple delta function form (which proved extremely important, because `hexagon cohomologies' can then be introduced, see~\cite{cubic}).

We also pay some attention to the Pachner move 3--3, in order to explain how exactly it is ensured by the form of our edge operators, and show thus the relevance of our approach for piecewise linear topology.

Below,
\begin{itemize}\itemsep 0pt
 \item in Section~\ref{s:L}, we discuss some generalities about Gaussian exponentials,
 \item in Section~\ref{s:1}, we consider the case of a single boson living on each 3-face of a pentachoron, and find that almost all Gaussian pentachoron weights are gauge equivalent in this case,
 \item in Section~\ref{s:1P}, we show how the 3--3 Pachner move relation follows from a reasoning with edge operators, for the `single boson' case of Section~\ref{s:1},
 \item in Section~\ref{s:2}, we consider the two-boson case,
 \item and in Section~\ref{s:d}, we discuss our results and plans for further work.
\end{itemize}

\section{Gaussian exponentials and Lagrangian subspaces of operators}\label{s:L}

\emph{Gaussian exponential} is, by definition, the exponential of a quadratic form. More specifically, let $F$ be a symmetric $n\times n$ matrix, and $\mathsf x=\begin{pmatrix}x_1&\ldots&x_n\end{pmatrix}^{\mathrm T}$---an $n$-column of variables. We assume that both matrix entries and variables~$x_t$, \ $t=1,\ldots,n$ (we use the subscript~$t$ because our variables will belong to \emph{tetrahedra}), take values in the field~$\mathbb C$. Then, the Gaussian exponential
\begin{equation}\label{W}
\mathcal W = \exp (-\frac{1}{2}\,\mathsf x^{\mathrm T} F\, \mathsf x)
\end{equation}
obeys the following $n$ differential equations:
\begin{equation}\label{ann}
\left( \partial_t+\sum_{t'=1}^n F_{tt'}x_{t'} \right) \mathcal W = 0, \qquad t=1,\ldots,n,
\end{equation}
where we denote $\partial_t=\partial / \partial x_t$.

We now consider the $2n$-dimen\-sional symplectic space~$V$---the $\mathbb C$-linear span of all operators $\partial_t$ and~$x_t$ (where $x_t$ is interpreted, of course, as a multiplication operator), with the symplectic scalar product given by the commutator. In other words, $V$ consists of all linear operators of the form
\begin{equation}\label{bg}
d = \sum_{t=1}^n (\beta_t\partial_t+\gamma_t x_t).
\end{equation}
The linear span of operators in the parentheses in~\eqref{ann} is a \emph{Lagrangian subspace} in~$V$, that is, a maximal subspace with all commutators of its elements vanishing. We will also need the notions of a \emph{$t$-component} of an operator~\eqref{bg}:
\begin{equation}\label{bgt}
d|_t \stackrel{\mathrm{def}}{=} \beta_t\partial_t+\gamma_t x_t,
\end{equation}
and a \emph{partial scalar product} (\,=\,partial commutator) for a given~$t$:
\[
[d,d']_t \stackrel{\mathrm{def}}{=} [d|_t, d'|_t] = \beta_t \gamma'_t - \gamma_t \beta'_t.
\]
For a given~$t$, all operators~\eqref{bgt} form, of course, a two-dimen\-sional space; we call it~$V_t$, so that $V=\bigoplus_{t=1}^n V_t$.

Conversely, almost all Lagrangian subspaces in~$V$ determine a Gaussian exponential. We will need, however, also Lagrangian subspaces that determine not exactly a Gaussian exponential but a limit case of these containing Dirac delta functions. The simplest example arises when the Lagrangian subspace is spanned by all operators~$x_t$; in this case~$\mathcal W$ is (proportional to) $\prod_{t=1}^n \delta(x_t)$. A much more interesting example will appear in Subsection~\ref{ss:2d}.

\section{One boson on a 3-face: gauge equivalence of almost all pentachoron weights}\label{s:1}

In this Section, we work within just one pentachoron; we call it $u=12345$, where $1,\ldots,5$ are (the numbers of) its vertices. We attach to each of its 3-faces---tetrahedra~$t$---one complex variable~$x_t$, and a two-dimen\-sional symplectic linear space $V_t=\linearspan\{\partial_t,x_t\}$.

Pentachoron weight will have the form~\eqref{W}, with a $5\times 5$ symmetric matrix~$F$. We will use simplified notations for its entries~$F_{tt'}$, replacing $t$ and~$t'$ with vertices $u\setminus t$ and~$u\setminus t'$, respectively. For instance, we will write~$F_{12}$ instead of~$F_{2345,1345}$.

\subsection{Gauge equivalence classes, and formulation of this section's main result}\label{ss:1g}

We consider the natural actions of $\Sp(2,\mathbb C)$ on each of five symplectic spaces~$V_t$, \ $t\subset u$. Hence, the product $\Sp(2,\mathbb C)^{\times 5}$ of five groups acts on the direct sum $V_u = \bigoplus_{t\subset u}V_t$, and, obviously, it maps a Lagrangian subspace $V_L \subset V_u$ again into a Lagrangian subspace. All Lagrangian subspaces break up into equivalence classes---orbits of this action; we call this \emph{gauge} equivalence.

The main result of this section is the following theorem.

\begin{theorem}\label{th:1g}
Almost all matrices~$F$ determine Lagrangian subspaces belonging to one single orbit.
\end{theorem}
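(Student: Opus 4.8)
My plan starts from a dimension count that both explains the statement and isolates where the real work lies. The gauge group $\Sp(2,\mathbb C)^{\times5}$ (recall $\Sp(2,\mathbb C)=\operatorname{SL}(2,\mathbb C)$) has dimension~$15$; the space of symmetric $5\times5$ matrices~$F$ has dimension~$15$; and the Lagrangian Grassmannian of the $10$-dimensional symplectic space~$V_u$ is an irreducible variety of dimension~$15$, in which the subspaces~$V_L$ coming from matrices~$F$ form the open dense big cell, namely the Lagrangian subspaces transverse to $\linearspan\{x_t\colon t\subset u\}$, with $F\mapsto V_L$ an isomorphism onto it. Hence a gauge orbit is dense in the Lagrangian Grassmannian if and only if it is $15$-dimensional, if and only if the stabilizer of its points is finite; and since the Lagrangian Grassmannian is irreducible, there can be at most one dense orbit, and if one exists then almost all~$F$ yield $V_L$ lying in it. So the theorem reduces to the single assertion that there is one matrix~$F_0$ whose stabilizer in $\Sp(2,\mathbb C)^{\times5}$ is $0$-dimensional (equivalently, by upper semicontinuity of the stabilizer dimension, that the stabilizer of a generic~$F$ is finite).

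To get there, I would first record how~$F$ transforms. Writing a gauge element at tetrahedron~$t$ as $M_t=\begin{pmatrix}a_t&b_t\\c_t&d_t\end{pmatrix}\in\Sp(2,\mathbb C)$, and $A=\diag(a_t)$, $B=\diag(b_t)$, $C=\diag(c_t)$, $D=\diag(d_t)$, one gets $F'=(A+FC)^{-1}(B+FD)$; so the stabilizer of~$F$ is cut out by $AF+FCF=B+FD$ together with $\det M_t=1$ for all~$t$. Its linearization at~$F$, with $M_t=I+\varepsilon\begin{pmatrix}\alpha_t&\beta_t\\\gamma_t&-\alpha_t\end{pmatrix}$, sends $(\alpha_t,\beta_t,\gamma_t)_t\in\mathfrak{sl}(2,\mathbb C)^{\times5}$ to $\dot F=\diag(\beta_t)-\diag(\alpha_t)\,F-F\diag(\alpha_t)-F\diag(\gamma_t)\,F$, whose off-diagonal part is the $10\times10$ homogeneous linear system $(\alpha_p+\alpha_q)F_{pq}+\sum_r\gamma_r F_{pr}F_{rq}=0$, $p\ne q$, in the ten unknowns $\alpha_1,\dots,\alpha_5,\gamma_1,\dots,\gamma_5$ (the diagonal part then determines $\beta_1,\dots,\beta_5$). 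So it suffices to exhibit one~$F_0$ for which this $10\times10$ determinant is nonzero. A convenient preliminary simplification is the shear $\partial_t\mapsto\partial_t+\lambda_t x_t$, $x_t\mapsto x_t$, which changes only $F_{tt}\mapsto F_{tt}+\lambda_t$, so one may take $\diag F_0=0$, after which the $r=p$ and $r=q$ terms drop out of the system.

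An alternative, more hands-on route to the same finiteness, matching the announced role of edge operators, is a step-by-step normalization tracked through the ten edge operators $\omega_b\in V_L$ (one per edge~$b$), where $\omega_b$ is supported on the three tetrahedra containing~$b$: a gauge move at a tetrahedron~$t$ acts by $\Sp(2,\mathbb C)$ on the $t$-components of the six $\omega_b$ that are supported on~$t$ and leaves the other four untouched, while the linear relations among all ten $\omega_b$ encode, as in~\cite{full-nonlinear}, a simplicial $2$-cochain on the pentachoron that is forced to be a cocycle and that gauge moves shift by coboundaries. Since the pentachoron is contractible, $H^2$ vanishes, so this cochain can be gauged to zero, i.e.\ $V_L$ is gauge equivalent to one fixed standard Lagrangian subspace.

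The step I expect to be the main obstacle is making ``the reduction terminates'' precise --- proving that no positive-dimensional gauge freedom survives the normalization, equivalently that the $10\times10$ determinant above is not identically zero in the entries of~$F$ --- and here the choice of~$F_0$ must be made with care. Notably, the most symmetric candidate, $F_0$ with zero diagonal and all other entries equal to~$1$, is \emph{not} suitable: its stabilizer turns out to be $5$-dimensional. So the heart of the argument is to pin down exactly which entries $F_{tt'}$, and which small subdeterminants built from them, have to be nonzero, and then to check the $10\times10$ non-vanishing under those genericity hypotheses --- for instance by exhibiting a single explicit numerical~$F_0$.
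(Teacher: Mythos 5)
Your first route is a correct and clean reduction, but it stops exactly where the theorem's actual content begins. The chain ``irreducible $15$-dimensional Lagrangian Grassmannian, at most one dense orbit, dense if and only if some stabilizer is finite'' is sound, and your linearized system $(\alpha_p+\alpha_q)F_{pq}+\sum_r\gamma_r F_{pr}F_{rq}=0$ does cut out the Lie algebra of the stabilizer. But the decisive claim --- that the resulting $10\times10$ determinant is not identically zero in the entries of~$F$ --- is never established; you explicitly defer it, and your own observation that the all-ones candidate has a $5$-dimensional stabilizer shows this nonvanishing is genuinely at stake rather than a formality. As written, the argument proves only the dichotomy ``either almost all $F$ lie in a single orbit, or no orbit is dense,'' which is essentially the paper's own heuristic parameter count (Subsection~\ref{ss:1h}) made rigorous. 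A single explicit certificate --- one numerical $F_0$ with nonzero determinant, e.g.\ the matrix~\eqref{cF} --- would close the gap, but it must actually be exhibited.

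The paper supplies exactly this missing nondegeneracy, in a more explicit form: it writes down the ten edge operators with coefficients \eqref{1:bt}--\eqref{1:gt}, shows that all nonzero partial commutators equal $\pm c$ with $c$ the explicit polynomial~\eqref{iae} (generically nonzero; $c=1$ for~\eqref{cF}), and divides by $\sqrt{c}$ to obtain a canonical symplectic basis of each~$V_t$ from edge-operator components. Since the relations \eqref{1:symm} and~\eqref{1:1c} express every $t$-component of every edge operator as a universal linear combination of the two basis ones, the element of $\Sp(2,\mathbb C)^{\times5}$ carrying one family of canonical bases to another carries one~$V_L$ to the other; the genericity condition is thus the concrete inequality $c\ne0$ rather than an unverified $10\times10$ determinant. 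Your second, ``cocycle'' route does not work as described: the paper explicitly remarks, right after listing the edge-operator properties in Subsection~\ref{ss:1e}, that in the bosonic one-boson case the vertex relations~\eqref{1:1c} hold on the nose and \emph{no} additive $2$-cocycle appears, so there is nothing for $H^2=0$ to kill --- the structure you are importing from~\cite{full-nonlinear} is precisely the one that fails to materialize here.
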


Below in Subsection~\ref{ss:1h} we explain heuristically why Theorem~\ref{th:1g} holds, and Subsections \ref{ss:1e} and~\ref{ss:1gi} contain its constructive proof.

\subsection{Heuristic parameter count}\label{ss:1h}

Symmetric $5\times 5$ matrix~$F$ contains 15 independent entries. Symplectic group $\Sp(2n,\mathbb C)$ contains 3 independent parameters for $n=1$. Hence, all $5\times 3=15$ parameters in~$F$ are gauge, and two Lagrangian subspaces produced by two generic symmetric matrices~$F$ as explained in Section~\ref{s:L} are gauge equivalent.

\subsection{Edge operators}\label{ss:1e}

For a given Lagrangian subspace $V_L \subset V_u$ and an edge~$b$, an \emph{edge operator}~$d_b$ belongs, by definition, to~$V_L$, and has the form
\begin{equation}\label{1:db}
d_b = \sum_{t\supset b}(\beta_{bt}\partial_t+\gamma_{bt} x_t)
\end{equation}
---that is, only the three tetrahedra containing~$b$ are involved. For a generic~$V_L$, this determines~$d_b$ up to a normalizing constant.

Choosing this constant in the most natural way, we get the following formulas for coefficients in~\eqref{1:db}:
\begin{align}
\beta_{bt} & =F_{ik}F_{jl}-F_{il}F_{jk}, \label{1:bt} \\
\gamma_{bt} & =F_{ik}F_{jl}F_{mm}-F_{il}F_{jk}F_{mm}-F_{ik}F_{jm}F_{lm} \nonumber \\
 & \qquad +F_{im}F_{jk}F_{lm}+F_{il}F_{jm}F_{km}-F_{im}F_{jl}F_{km}. \label{1:gt} 
\end{align}
Here, vertices $i,\ldots,m$ make a permutation of $1,\ldots,5$, satisfying the following conditions:
\begin{enumerate}\itemsep 0pt
 \item edge $b=ij$,
 \item tetrahedron $t=ijkl$,
 \item and the vertices in~$t$ must be written in such order that gives its orientation as induced from~$u=12345$. That is, if $b=12$, then the three tetrahedra containing~$b$ and contained in~$u$ must be written as $1234$, $1253$ and~$1245$.
\end{enumerate}

A generic~$V_L$ is exactly the linear span of the ten edge operators. Also, if all coefficients are as in \eqref{1:bt} and~\eqref{1:gt}, then the edge operators~\eqref{1:db} have the following directly checked properties:
\begin{itemize}\itemsep 0pt
 \item independence of edge orientation:
\begin{equation}\label{1:symm}
d_{ij}=d_{ji},
\end{equation}
 \item simple linear dependencies at each vertex~$i$:
\begin{equation}\label{1:1c}
\sum_{j\ne i} d_{ij}=0\quad \text{in each vertex }i,
\end{equation}
 \item equalness of $t$-components for non-intersecting edges of~$t$:
\[
d_{ij}|_{ijkl}=d_{kl}|_{ijkl}, \text{ \ that is, \ } \beta_{ij,ijkl}=\beta_{kl,ijkl},\quad \gamma_{ij,ijkl}=\gamma_{kl,ijkl},
\]
 \item and partial symplectic scalar products $[d_b,d_{b'}]_t$ all either vanishing or equal to each other, up to a possible sign. Namely, a nonzero product appears iff $b$ and~$b'$ both belong to~$t$ and, moreover, share a vertex, call it~$i$:
\begin{equation}\label{1:c}
[d_{ij},d_{ik}]_{ijkl}=\pm c.
\end{equation}
In~\eqref{1:c}, the plus sign appears if the orientation of $t=ijkl$ induced from the pentachoron corresponds to the order~$ijkl$ of its vertices, otherwise the minus sign is used. As for the quantity~$c$, it has the following remarkable form: up to a coefficient~$1/2$, $c$ consists of the same terms that enter in the determinant of~$F$, but only those that correspond to \emph{cyclic} permutations:
\begin{multline}\label{iae}
c = \frac{1}{2} \sum_{\mathrm{cyclic}\;\pi} \sgn \pi \, F_{1\pi(1)}F_{2\pi(2)}F_{3\pi(3)}F_{4\pi(4)}F_{5\pi(5)}\\
=F_{12}F_{13}F_{24}F_{35}F_{45}-F_{12}F_{14}F_{23}F_{35}F_{45}-F_{12}F_{13}F_{25}F_{34}F_{45}+F_{12}F_{15}F_{23}F_{34}F_{45}\\
+F_{13}F_{14}F_{23}F_{25}F_{45}-F_{13}F_{15}F_{23}F_{24}F_{45}+F_{12}F_{14}F_{25}F_{34}F_{35}-F_{12}F_{15}F_{24}F_{34}F_{35}\\
-F_{13}F_{14}F_{24}F_{25}F_{35}+F_{14}F_{15}F_{23}F_{24}F_{35}+F_{13}F_{15}F_{24}F_{25}F_{34}-F_{14}F_{15}F_{23}F_{25}F_{34}.
\end{multline}
\end{itemize}

\begin{remark}
Note the difference with the fermionic case~\cite{full-nonlinear}, where \eqref{1:1c} also holds if edge operators are normalized properly, but property~\eqref{1:symm} is replaced with the \emph{anti}symmetry. The reader can see in~\cite[Subsection~3.3]{full-nonlinear} how it leads to the appearance of an \emph{additive} 2-cocycle---but in our present case, it does \emph{not} appear!
\end{remark}

\begin{remark}
But a cocycle must be somewhere, if we believe in the analogy between bosons and fermions. And indeed, it will appear in our Section~\ref{s:2}, but strikingly enough, it will be \emph{multiplicative}!
\end{remark}

\subsection[Gauge isomorphism for two different~$V_L$, and a canonical form for matrix~$F$]{Gauge isomorphism for two different~$\boldsymbol{V_L}$, and a canonical form for matrix~$\boldsymbol F$}\label{ss:1gi}

It follows from~\eqref{1:c} that operators $(d_{12}|_{1234})/\sqrt{c}$ and~$(d_{13}|_{1234})/\sqrt{c}$ form a \emph{canonical basis} in~$V_{1234}$. In a two-dimen\-sional space, this simply means that the symplectic scalar product of two basis vectors is one. Similarly, $(-d_{12}|_{1235})/\sqrt{c}$ and~$(d_{13}|_{1235})/\sqrt{c}$ form a canonical basis in~$V_{1235}$, and so on. This gives us the obvious way of constructing a gauge automorphism of~$V_u=V_{1234}\oplus\ldots\oplus V_{2345}$ that sends the Lagrangian subspace corresponding to one matrix~$F$ into the Lagrangian subspace corresponding to another matrix. Namely, for each of the five~$t$, the element in $\Sp(2,\mathbb C)$ acting in~$V_t$ must map one such basis into the other, and the gauge automorphism is the direct sum of five such elements.

We have thus proved Theorem~\ref{th:1g}.

\smallskip

In particular, we can consider such matrix~$F$ that the mentioned canonical bases take the following simplest form: $\partial_{1234}$ and $x_{1234}$ in~$V_{1234}$, then $\partial_{1235}$ and $-x_{1235}$ in~$V_{1235}$, and, in general, as follows: \emph{the basis in~$V_{ijkl}$ consists of $\partial_{ijkl}$ and $\pm x_{ijkl}$, where $i<j<k<l$, and the plus sign is taken if the tetrahedron orientation determined by the order~$ijkl$ of its vertices coincides with the orientation induced from the pentachoron; otherwise the minus is taken.} Such~$F$ can be regarded as `canonical', and a direct calculation shows that it looks as follows:
\begin{equation}\label{cF}
F = \begin{pmatrix}
  1 & 0 & -1 &  1 & 0 \\
  0 & 0 & 0 & -1 &  1 \\
  -1 & 0 & 1 & 0 & -1 \\
   1 & -1 & 0 & 0 & 0 \\
  0 &  1 & -1 & 0 & 1
\end{pmatrix}.
\end{equation}

We will see in Subsection~\ref{ss:1E} how to construct an algebraic realization of Pachner move 3--3 using matrices~\eqref{cF}.

It makes sense to write out the explicit form of the edge operators for matrix~\eqref{cF}. It follows from the above and from the fact that $c=1$ for it, where $c$ is the expression~\eqref{iae}, that, for a tetrahedron~$t=ijkl$, \ $i<j<k<l$,
\begin{equation}\label{et}
d_{ij}|_t=d_{kl}|_t=\partial_t,\qquad d_{ik}|_t=d_{jl}|_t=\pm x_t,\qquad d_{il}|_t=d_{jk}|_t=-\partial_t\mp x_t,
\end{equation}
where the upper sign is taken if the orientation~$ijkl$ coincides with the one induced from the pentachoron; otherwise the lower sign is taken.

\section{One boson on a 3-face: the 3--3 relation}\label{s:1P}

\subsection{The 3--3 Pachner move}\label{ss:33}

The 3--3 Pachner move transforms a cluster of three pentachora grouped around their common 2-face~$s$, and forming what is called in PL topology the \emph{star} of~$s$, into another cluster, occupying the same place in a manifold triangulation, and also consisting of three pentachora. We use below the following terminology: the initial cluster is called the \emph{left-hand side} (l.h.s.) of the move, while the final cluater---its \emph{right-hand side} (r.h.s.). We orient all pentachora consistently; when this orientation coincides with the one determined by the (increasing) order of pentachoron vertices, we denote the pentachoron simply by its vertices, and to mark the opposite orientation, we use the wide tilde.

The l.h.s. of our 3--3 move will consist of pentachora $12345$, $\widetilde{12346}$ and~$12356$. There are the following \emph{inner tetrahedra}: $1234$, $1235$, and~$1235$.

As for the r.h.s., it consists of pentachora $12456$, $\widetilde{13456}$ and~$23456$. The inner tetrahedra are $1456$, $2456$ and~$3456$.

Both sides of the move have, of course, the same boundary, consisting of nine tetrahedra.

Below, we denote the weight of a pentachoron $ijklm$ as~$\mathcal W_{ijklm}$, and the weight of pentachoron $\widetilde{ijklm}$ as~$\widetilde{\mathcal W}_{ijklm}$.

One more comment that will be important for Lemma~\ref{l:1:r} below is that there are the same fifteen edges in both sides of the move 3--3, and all these edges are \emph{boundary}.

\subsection{Cluster weights and formal integrals}\label{ss:eo:cw}

First, we introduce the following notion of \emph{quasi-Gaussian weight}. In brief, this is a multidimensional delta function multiplied by a Gaussian exponential. To be exact, let there be some variables~$x_t$ (in our context, they correspond to tetrahedra~$t$); we combine them, for a moment, into a vector~$\vec x$. Let $\ell_1(\vec x),\ldots,\ell_m(\vec x)$ be \emph{linearly independent linear forms} (case $m=0$ is acceptable), and $\mathcal A(\vec x)$ a quadratic form. Quasi-Gaussian weight is, by definition, an expression of the form
\begin{equation}\label{eo:qG}
\const \cdot \prod_{i=1}^m \delta\bigl(\ell_i(\vec x)\bigr) \cdot \exp \mathcal A(\vec x) .
\end{equation}

Suppose that individual pentachoron weights have the form~\eqref{eo:qG} (which is certainly so for Gaussian exponentials~\eqref{W}). Making a cluster of pentachora leads to taking the product of weights, and then some integrals (as in formula~\eqref{1:WWW} below. Recall that, as we explained in Subsection~\ref{ss:intro-6}, our integrals are \emph{formal}, and their definition is clear from the examples \eqref{eo:int} and~\eqref{eo:ea}). Assume that, in doing so, we do \emph{not} run into
\begin{enumerate}\itemsep 0pt
 \item\label{i:eo:p} a product of delta functions of \emph{linearly dependent} arguments,
 \item\label{i:eo:c} an integral of the form $\int \const \mathrm dx$, where `$\const$' means an expression not depending on~$x$ (and not an identical zero).
\end{enumerate}
Then, it is easily seen that the cluster weight is well defined (up to a sign, due to possible square roots arising from integrations), and has again the quasi-Gaussian form~\eqref{eo:qG}.

\begin{remark}
Besides the move 3--3, there are, in four dimensions, also Pachner moves 2--4 and 1--5, and in the r.h.s.'s of these, there arise products of delta functions mentioned in item~\ref{i:eo:p} which cannot be further integrated. We have already discussed this problem and its possible solutions in Subsections \ref{ss:intro-6} and~\ref{ss:intro-7}.
\end{remark}

Cluster weights obey differential equations, as we will see in Subsection~\ref{ss:1E}, much like weight~\eqref{W} obeys~\eqref{ann}. So, \emph{partial derivatives} of quasi-Gaussian weights come into play. These may contain, compared with~\eqref{eo:qG}, also factors linear in variables~$x_t$, and derivatives of delta functions. This of course by no means impedes integrating them in the same formal way as we did it above.

There are two easily checked facts about partial derivatives and formal integrals that will be of use for us in Lemma~\ref{l:eo:r} below. We formulate thses facts as the following Lemma.

\begin{lemma}\label{l:eo:k}
 \begin{enumerate}
  \item\label{i:eo:k1} Let $\mathcal W$ be a quasi-Gaussian weight. Then integration w.r.t.\ one variable commutes with the differentiation w.r.t.\ another variable:
\begin{equation}\label{eo:id}
 \frac{\partial}{\partial x_t} \int \mathcal W \,\mathrm dx_{t'} =  \int \frac{\partial \mathcal W}{\partial x_t} \,\mathrm dx_{t'},
\end{equation}
$t\ne t'$, if both sides of~\eqref{eo:id} exist.
  \item\label{i:eo:k2} Integral of the partial derivative of a quasi-Gaussian weight w.r.t.\ the integration variable vanishes:
\begin{equation*}
\int \frac{\partial \mathcal W}{\partial x_t} \,\mathrm dx_t = 0.
\end{equation*}
 \end{enumerate}
\qed
\end{lemma}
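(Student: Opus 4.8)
The plan is to reduce both parts to the case of a \emph{genuine} Gaussian exponential~\eqref{W} (allowing also a linear term in the exponent), where the two assertions are elementary facts of several-variable calculus, and then to transport them through the definition of the formal integral by analytic continuation. First I would record the genuine-Gaussian case: for $\mathcal G(\mathsf z)=\exp(-\tfrac12\mathsf z^{\mathrm T}F\mathsf z+\mathsf b^{\mathrm T}\mathsf z)$ in variables $z_1,\dots,z_N$ and any polynomial $P(\mathsf z)$, freeze all variables but $z_k$ and complete the square; the integral $\int P(\mathsf z)\,\mathcal G\,\mathrm dz_k$ then converges absolutely on a nonempty open set of the parameters (essentially $\mathrm{Re}\,F_{kk}>0$, up to a rotation of the integration contour as in~\eqref{eo:int}). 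On that set, part~\ref{i:eo:k2} is the fundamental theorem of calculus, $\int\partial_{z_k}\!\bigl(P(\mathsf z)\,\mathcal G\bigr)\,\mathrm dz_k=\bigl[P(\mathsf z)\,\mathcal G\bigr]_{z_k=-\infty}^{z_k=+\infty}=0$, the bracket vanishing at $\pm\infty$ because of the Gaussian factor; and part~\ref{i:eo:k1} is differentiation under the integral sign, legitimate because for $j\ne k$ both $\mathcal G$ and $\partial_{z_j}\mathcal G=(\text{linear in }\mathsf z)\cdot\mathcal G$ are dominated, locally uniformly in $z_j$, by one integrable function. Each of these identities is between functions holomorphic in $F$, $\mathsf b$ (and in the frozen variables) on the whole domain where the formal integral is defined, and the two sides agree on an open subset, hence everywhere.

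Next I would pass from genuine Gaussians to quasi-Gaussian weights~\eqref{eo:qG}. Reading~\eqref{eo:ea} backwards, each factor $\delta\bigl(\ell_i(\vec x)\bigr)$ equals the formal integral $\tfrac{1}{2\pi}\int\exp(\sqrt{-1}\,\ell_i(\vec x)\,y_i)\,\mathrm dy_i$ over an auxiliary variable~$y_i$. Hence $\mathcal W$, and also $\partial_{x_t}\mathcal W$ — which by the product and chain rules is a sum of terms of the form $P(\vec x)\prod_i\delta^{(\varepsilon_i)}\!\bigl(\ell_i(\vec x)\bigr)\exp\mathcal A(\vec x)$ with $P$ polynomial and $\varepsilon_i\in\{0,1\}$ — become formal Gaussian integrals over $\vec y=(y_1,\dots,y_m)$ of expressions $Q(\vec x,\vec y)\exp q(\vec x,\vec y)$ with $q$ quadratic (a factor $\sqrt{-1}\,y_i$ accounts for each $\delta'$). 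Applying the first step in the enlarged set of variables $(\vec x,\vec y)$, and then moving the operations $\partial_{x_t}$ and $\int\cdot\,\mathrm dx_{t'}$ past the auxiliary integration $\int\cdot\,\mathrm d\vec y$, gives~\eqref{eo:id} and the vanishing in~\ref{i:eo:k2}; the hypothesis in~\ref{i:eo:k1} that both sides exist (equivalently, that the bad cases~\ref{i:eo:p}--\ref{i:eo:c} of Subsection~\ref{ss:eo:cw} do not occur) is exactly what is needed so that no step of the continuation meets an undefined symbol.

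The calculus in the first step is routine; the main obstacle will be the bookkeeping in the second, namely justifying that the auxiliary $\vec y$-integrals representing the delta factors genuinely commute with $\partial_{x_t}$ and with the $x_{t'}$-integration. This is a Fubini-type statement for iterated formal Gaussian integrals; I would prove it by induction on the number $m$ of delta factors, peeling off one variable $y_i$ at a time, and checking at each stage that the common domain of absolute convergence of all the integrals involved is nonempty, so that the analytic continuation defining the formal integral is unambiguous (the residual overall sign from square roots being harmless, since it enters both sides of each identity the same way). With that in hand, both parts of the lemma follow.
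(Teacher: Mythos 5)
Your proposal is correct in substance, but it is worth saying that the paper itself offers no argument here at all: Lemma~\ref{l:eo:k} is stated with a \qed{} as one of ``two easily checked facts'' about formal integrals of quasi-Gaussian weights, so any written proof necessarily goes beyond the source. Your route --- reduce to a genuine Gaussian with a linear term, prove \ref{i:eo:k1} by differentiation under the integral sign and \ref{i:eo:k2} by the fundamental theorem of calculus on a region of absolute convergence, then transport both identities by analytic continuation and represent each delta factor of~\eqref{eo:qG} as an auxiliary formal Fourier--Gaussian integral (reading~\eqref{eo:ea} backwards) --- is a legitimate and rather more explicit way to substantiate what the author treats as routine, and it has the virtue of staying squarely within the paper's own definition of formal integrals via~\eqref{eo:int}. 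One refinement you should make explicit: when the integration variable enters the exponent only linearly (precisely the degenerate situation that produces the delta in~\eqref{eo:ea}, and which does occur for the cluster weights in the 3--3 move), there is no Gaussian damping in that variable, so your boundary-term and domination arguments do not apply literally; you must either first perturb the quadratic form (insert a small quadratic term in that variable and continue back, which is exactly the analytic continuation the paper's definition prescribes, with the limit understood distributionally) or invoke the distributional identities directly (e.g.\ $\int\partial_{x}\exp(\sqrt{-1}\,yx)\,\mathrm dx=2\pi\sqrt{-1}\,y\,\delta(y)=0$). Likewise, the phrase ``holomorphic on the whole domain where the formal integral is defined'' should be weakened, since at degenerate parameter values the formal integral is a distribution rather than a function value. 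With these caveats your Fubini-type induction over the auxiliary variables is sound, and the hypothesis ``if both sides exist'' (i.e.\ the exclusion of cases \ref{i:eo:p} and \ref{i:eo:c} of Subsection~\ref{ss:eo:cw}) is used exactly where you say it is.
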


\subsection{The algebraic realization of move 3--3}\label{ss:eo:ar}

\begin{theorem}\label{th:1:33}
The following algebraic realization of Pachner move 3--3 holds for $\mathcal W_u$ of the form~\eqref{W} with $F$ given by~\eqref{cF}, and $\widetilde{\mathcal W}_u$ also of the form~\eqref{W}, but with $F$ given by \emph{minus} the r.h.s.\ of~\eqref{cF}.
\begin{multline}\label{1:WWW}
\iiint \mathcal W_{12345} \widetilde{\mathcal W}_{12346} \mathcal W_{12356} \,\mathrm dx_{1234} \,\mathrm dx_{1235} \,\mathrm dx_{1236} \\
 = \const \iiint \mathcal W_{12456} \widetilde{\mathcal W}_{13456} \mathcal W_{23456} \,\mathrm dx_{1456} \,\mathrm dx_{2456} \,\mathrm dx_{3456}.
\end{multline}
\end{theorem}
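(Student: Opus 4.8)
The plan is to prove the $3$--$3$ relation \eqref{1:WWW} by computing both sides explicitly as quasi-Gaussian weights in the nine boundary tetrahedra and checking that they coincide up to a constant and the usual sign ambiguity from square roots. Concretely, I would substitute the canonical matrix \eqref{cF} into $\mathcal W_{ijklm}$ (and its negative into $\widetilde{\mathcal W}_{ijklm}$), so that each pentachoron weight is an honest Gaussian exponential $\exp(-\tfrac12 \mathsf x^{\mathrm T} F \mathsf x)$ in its five tetrahedral variables, and then carry out the three Gaussian integrations on each side using the formal rules \eqref{eo:int} and \eqref{eo:ea}. Since the integrands are Gaussians, the integrations are purely linear-algebraic: completing the square in the inner variables produces a Gaussian in the boundary variables together with possible delta-function factors (coming from degenerate directions, as in \eqref{eo:ea}). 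The assertion is that the resulting quadratic form and the resulting collection of linear forms inside the deltas agree on both sides.

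Rather than a brute-force Gaussian elimination, the cleaner route — and the one I expect the paper to take — is via the edge operators of Subsection~\ref{ss:1E}. The idea is: the left cluster weight is annihilated by a Lagrangian family of first-order operators obtained by adding up the edge operators of the three constituent pentachora along shared inner tetrahedra (with the sign twist $\sigma\colon \partial_t \mapsto -\partial_t$ on one side of each gluing, as in Subsection~\ref{ss:intro-5}), and, crucially, the operators attached to the inner tetrahedra get "integrated out" by Lemma~\ref{l:eo:k}\eqref{i:eo:k2} while those attached to boundary tetrahedra survive by Lemma~\ref{l:eo:k}\eqref{i:eo:k1}. So one reduces to showing that the boundary edge operators built from the left cluster span the \emph{same} Lagrangian subspace (in the $18$-dimensional symplectic space attached to the nine boundary tetrahedra) as those built from the right cluster. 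Here the remarkable combinatorial facts \eqref{1:symm}, \eqref{1:1c} and \eqref{et} do the work: the fifteen edges are the same on both sides and all boundary (as noted at the end of Subsection~\ref{ss:33}), the relation $d_{ij}=d_{ji}$ makes an edge operator unambiguous, and the vertex relations $\sum_{j\ne i} d_{ij}=0$ let one pass between the two triangulations of the edge set. Matching Lagrangian subspaces plus matching one normalization constant then forces the two quasi-Gaussian weights to be equal up to a scalar.

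The main technical step — and the likely main obstacle — is bookkeeping of \emph{orientations and signs}: each inner tetrahedron appears in two of the three left pentachora with opposite induced orientations, which is exactly why $\widetilde{\mathcal W}_{12346}$ (not $\mathcal W_{12346}$) enters the left side and $\widetilde{\mathcal W}_{13456}$ the right, and the $\sigma$-conjugation in Subsection~\ref{ss:intro-5} must be applied to make the edge operators of glued pentachora add rather than clash. One has to check that with the choice "$F$ for $ijklm$, $-F$ for $\widetilde{ijklm}$" every inner edge operator from one pentachoron cancels, upon integration, against the matching piece from its neighbour, leaving a consistent boundary operator; the signs in \eqref{et} and in \eqref{1:c} are precisely calibrated for this. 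A secondary point to verify is that no forbidden integral of the type listed in items~\ref{i:eo:p}--\ref{i:eo:c} of Subsection~\ref{ss:eo:cw} occurs during the three integrations on either side, so that the formal-integral calculus applies and the cluster weights are genuinely quasi-Gaussian; for the move $3$--$3$ this is expected to go through (unlike $2$--$4$ and $1$--$5$), but it must be confirmed by inspecting which inner variables appear only linearly in the exponent.

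Once the Lagrangian subspaces on the two sides are identified and the sign/orientation ledger is closed, the equality \eqref{1:WWW} follows, with the constant absorbing the square-root factors $\sqrt{\pi/a}$ and the $2\pi$'s from \eqref{eo:int} and \eqref{eo:ea}; the two-valuedness of these square roots is exactly the "up to a sign" caveat already built into the statement.
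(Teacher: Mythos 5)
Your proposal is correct and follows essentially the same route(s) as the paper: the paper's official proof is exactly your first option (direct evaluation of both sides of~\eqref{1:WWW} as a quasi-Gaussian in the boundary variables), and your preferred conceptual route via cluster edge operators is precisely what the paper supplies afterwards in Subsection~\ref{ss:1E} through Lemmas~\ref{l:eo:r} and~\ref{l:1:r}. The only step you state rather than establish --- that the common family of boundary edge operators pins down the quasi-Gaussian weight up to a constant --- is the content of Lemma~\ref{l:1:r}, which the paper verifies by a (computer) dimension count of the span of the fifteen operators and of its pure-$x$ part.
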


We mean of course that, when constructing the weight for a pentachoron $ijklm$, \ $i<i<k<l<m$, we use the formulas from Section~\ref{s:1} but make vertex substitutions $1\mapsto i$, \ldots, $5\mapsto m$.

\begin{remark}
The integration variables in~\eqref{1:WWW} are exactly those corresponding to the inner tetrahedra in the respective side of the move, listed in Subsection~\ref{ss:33}. The result in either side of~\eqref{1:WWW} depends on the nine \emph{boundary} variables.
\end{remark}

\begin{proof}[Proof of Theorem~\ref{th:1:33}]
A direct calculcation shows that both sides of~\eqref{1:WWW} equal
\begin{multline*}
\const \cdot\, \delta(x_{1245}-x_{1246}-x_{1345}+x_{1346}) \, \delta(x_{1356}-x_{2356}-x_{1346}+x_{2346}) \\
\cdot \exp\bigl( -(x_{1356}-x_{1346}-x_{1256}+x_{1246})(x_{2346}-x_{2345}-x_{1346}+x_{1345}) \bigr),
\end{multline*}
and even the constants are here the same.
\end{proof}

This proof does not explain, however, \emph{why} \eqref{1:WWW} holds, which is of course important if we want to find further Pachner move realizations. We do this below in Subsection~\ref{ss:1E}. The fundamental role will be played, of course, by edge operators.

\subsection{Cluster edge operators}\label{ss:1E}

We now introduce the notion of edge operators not only for an individual pentachoron, but for a \emph{cluster} of pentachora such as the l.h.s.\ or r.h.s.\ of move 3--3. Although we will not use any other clusters here, we remark that, actually, our definition is meaningful if we understand under the name of `cluster' any \emph{triangulated orientable PL 4-manifold with boundary}.

First, we re-write formula~\eqref{1:db} in such way that it will refer explicitly to the relevant pentachoron~$u$:
\begin{equation}\label{1:dbu}
d_b^{(u)} = \sum_{\substack{t\supset b\\ t\subset u}} (\beta_{bt}^{(u)}\partial_t + \gamma_{bt}^{(u)}x_t),
\end{equation}
while the cluster edge operator that we are constructing will be denoted~$\mathsf d_b$.

One key requirement enabling our construction is as follows. Let a tetrahedron~$t$ belong to both pentachoron~$u_1$ and pentachoron~$u_2$, then we require that
\begin{equation}\label{1:bg}
\beta_{bt}^{(u_1)}=\beta_{bt}^{(u_2)},\qquad \gamma_{bt}^{(u_1)}=-\gamma_{bt}^{(u_2)}.
\end{equation}
\emph{Note that this holds indeed for matrix~$F$ given by~\eqref{cF}}, because $F$ changes its sign for the pentachora with wide tildes. This all agrees also with the general definition of `canonical' basis in the paragraph before~\eqref{cF}.

Given an edge~$b$, consider all \emph{boundary} tetrahedra $t\supset b$. Each such~$t$ belongs to a single pentachoron~$u$, and we simply take the $t$-component of the pentachoron edge operator~\eqref{1:dbu} for the $t$-component of our cluster edge operator, while for any inner (non-boundary)~$t$ we set its $t$-component to be zero:
\begin{equation}\label{1:bi}
\mathsf d_b|_t \stackrel{\mathrm{def}}{=} \begin{cases}d_b^{(u)}|_t & \text{for a boundary }t, \\ 0 & \text{for an inner }t. \end{cases}
\end{equation}
Below, we ignore the inner tetrahedra, and consider operators~$\mathsf d_b$ as sums of their components~\eqref{1:bi} over only boundary tetrahedra. It is then clear that each~$\mathsf d_b$ is \emph{the same for the l.h.s.\ and r.h.s.\ of the Pachner move}---enough to refer to the explicit form~\eqref{et}.

\begin{lemma}\label{l:eo:r}
For any edge~$b$ in the l.h.s.\ or r.h.s.\ of move 3--3, both sides of~\eqref{1:WWW} are annihilated by the edge operator\/~$\mathsf d_b$.
\end{lemma}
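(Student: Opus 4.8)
\textbf{Proof plan for Lemma~\ref{l:eo:r}.}
The plan is to reduce the assertion about the (integrated) cluster weights to the already-established facts that each \emph{individual} pentachoron weight is annihilated by its own edge operators~\eqref{1:dbu} (a consequence of~\eqref{ann} together with the definition of edge operator), and that the integrations can be passed through differentiations or killed entirely by Lemma~\ref{l:eo:k}. I will work on, say, the left-hand side of~\eqref{1:WWW}; the right-hand side is completely analogous, and since $\mathsf d_b$ is literally the same operator on both sides (it involves only boundary $t$-components, and these coincide by~\eqref{1:bi} and~\eqref{et}), treating one side suffices.

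First I would fix an edge~$b$ and, using the vertex-labelled form of~\eqref{et} for each of the three pentachora $12345$, $\widetilde{12346}$, $12356$, write down the three pentachoron edge operators $d_b^{(12345)}$, $d_b^{(\widetilde{12346})}$, $d_b^{(12356)}$ as differential operators in the ten variables. By~\eqref{ann} each $d_b^{(u)}$ annihilates $\mathcal W_u$ (resp.\ $\widetilde{\mathcal W}_u$), hence annihilates the product $\mathcal W_{12345}\widetilde{\mathcal W}_{12346}\mathcal W_{12356}$ as a first-order operator acting through the Leibniz rule --- more precisely, $(d_b^{(u_1)}+d_b^{(u_2)}+d_b^{(u_3)})$ applied to the product gives zero, because each summand kills its own factor and commutes (as a $t$-component operator) with the others whenever it acts on a $t$ not shared. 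Next I split the sum $\sum_u d_b^{(u)}$ over $t$-components: for a \emph{boundary} tetrahedron $t\supset b$, only one pentachoron $u$ contains $t$, so the corresponding $t$-component is exactly $\mathsf d_b|_t$ by~\eqref{1:bi}; for an \emph{inner} tetrahedron $t\supset b$ (one of $x_{1234},x_{1235},x_{1236}$), the matching condition~\eqref{1:bg} --- valid for~\eqref{cF} precisely because the tilded pentachoron carries $-F$ --- means the two contributions are $\beta\partial_t+\gamma x_t$ and $\beta\partial_t-\gamma x_t$ from the two pentachora sharing $t$, so their sum is $2\beta\,\partial_t$, a pure derivative with respect to an integration variable.

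I then apply the integration $\iiint\!\cdot\;\mathrm dx_{1234}\,\mathrm dx_{1235}\,\mathrm dx_{1236}$ to the identity $\sum_u d_b^{(u)}\bigl(\mathcal W_{12345}\widetilde{\mathcal W}_{12346}\mathcal W_{12356}\bigr)=0$. By Lemma~\ref{l:eo:k}\ref{i:eo:k1} the boundary $t$-components $\mathsf d_b|_t$ (which differentiate only boundary variables, or multiply by boundary variables, hence are constant w.r.t.\ the three integration variables) pull outside all three integrals, producing $\mathsf d_b$ applied to the left-hand side of~\eqref{1:WWW}. By Lemma~\ref{l:eo:k}\ref{i:eo:k2} each inner-$t$ contribution $2\beta\,\partial_t$ integrates to zero against $\mathrm dx_t$ (after using part~\ref{i:eo:k1} again to carry the remaining two integrations past it). Hence $0=\mathsf d_b\bigl(\text{l.h.s.\ of }\eqref{1:WWW}\bigr)$, which is the claim.

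The main obstacle I expect is purely bookkeeping rather than conceptual: one must check that every application of Lemma~\ref{l:eo:k}\ref{i:eo:k1} is legitimate, i.e.\ that the relevant (quasi-Gaussian) weights and their derivatives are integrable in the formal sense --- which is exactly the content of the non-degeneracy assumptions~\ref{i:eo:p}--\ref{i:eo:c} from Subsection~\ref{ss:eo:cw}, and which is guaranteed here because the concrete matrix~\eqref{cF} was shown in the proof of Theorem~\ref{th:1:33} to yield a genuine quasi-Gaussian result on both sides. A minor secondary point is to confirm that the inner $t$-components really do collapse to $2\beta\,\partial_t$ with \emph{no} surviving $\gamma x_t$ term; this is where~\eqref{1:bg}, and through it the sign convention $F\mapsto -F$ for tilded pentachora stated in Theorem~\ref{th:1:33}, is indispensable, and I would verify it tetrahedron-by-tetrahedron from~\eqref{et}.
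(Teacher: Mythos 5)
Your overall strategy---use the annihilation of each individual weight by its own pentachoron edge operators \eqref{1:dbu}, the matching condition \eqref{1:bg} forced by the $F\mapsto -F$ convention, and then Lemma~\ref{l:eo:k}---is exactly the paper's strategy, but your pivotal intermediate claim is false: $\bigl(d_b^{(u_1)}+d_b^{(u_2)}+d_b^{(u_3)}\bigr)$ applied to the product $\mathcal W_{12345}\widetilde{\mathcal W}_{12346}\mathcal W_{12356}$ is \emph{not} zero whenever the edge~$b$ lies in an inner tetrahedron (e.g.\ $b=12$, which lies in all of $1234$, $1235$, $1236$). The culprit is precisely the Leibniz cross terms your parenthetical remark waves away: if $t$ is an inner tetrahedron shared by $u_1$ and~$u_2$, the differential part $\beta_{bt}^{(u_1)}\partial_t$ of $d_b^{(u_1)}$ acts not only on $\mathcal W_{u_1}$ but also on $\mathcal W_{u_2}$, which depends on~$x_t$; ``commutes with the others whenever it acts on a $t$ not shared'' covers only the harmless cases and says nothing about the shared ones, which are exactly where the lemma has content. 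The correct pointwise identity, obtained from $d_b^{(u)}\mathcal W_u=0$ together with \eqref{1:bg}, is $\sum_u d_b^{(u)}\bigl(\mathcal W_1\mathcal W_2\mathcal W_3\bigr)=\sum_{\text{inner }t\supset b}\beta_{bt}\,\partial_t\bigl(\mathcal W_1\mathcal W_2\mathcal W_3\bigr)$, not zero; this is what the paper's chain of equalities establishes, in the equivalent form \eqref{eo:pd}, namely $\mathsf d_b(\mathcal W_1\mathcal W_2\mathcal W_3)=-\beta_{bt}\,\partial_t(\mathcal W_1\mathcal W_2\mathcal W_3)$ for an inner~$t$. (Your claim is literally true only for edges such as $45$ or $46$, contained in no inner tetrahedron, where the statement is anyway just the single-pentachoron annihilation.)

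The gap is repairable and the rest of your plan then goes through: combining the corrected identity with your (correct) operator decomposition $\sum_u d_b^{(u)}=\mathsf d_b+\sum_{\text{inner }t\supset b}2\beta_{bt}\partial_t$ gives $\mathsf d_b(\mathcal W_1\mathcal W_2\mathcal W_3)=-\sum_{\text{inner }t\supset b}\beta_{bt}\,\partial_t(\mathcal W_1\mathcal W_2\mathcal W_3)$, i.e.\ a sum of pure derivatives in integration variables; integrating, pulling $\mathsf d_b$ (which involves only boundary variables) through the integrals by Lemma~\ref{l:eo:k}\ref{i:eo:k1}, and killing each $\partial_t$ term by Lemma~\ref{l:eo:k}\ref{i:eo:k2} yields $\mathsf d_b$ applied to either side of \eqref{1:WWW} equals zero. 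With that correction your argument coincides with the paper's proof (the paper treats one inner tetrahedron for definiteness and notes the other cases---zero or three inner tetrahedra containing~$b$---are analogous, whereas your version handles them simultaneously, which is fine).
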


\begin{proof}
Choose one side of the Pachner move, for instance, its l.h.s. Consider an edge operator~$\mathsf d_{ij}$, and let there be, for definiteness, exactly one inner tetrahedron~$t=ijkl$ (in the mentioned l.h.s.) containing edge~$ij$ (other cases are analyzed in a similar way). Denote the two pentachora containing~$t$ as $u_1$ and~$u_2$, and their respective weights as $\mathcal W_1$ and~$\mathcal W_2$. Now do the following calculation, taking into account \eqref{1:dbu} and~\eqref{1:bg}:
\begin{multline*}
\beta_{ij,t}^{(u_1)}\partial_t(\mathcal W_1 \mathcal W_2) = \mathcal W_2 \beta_{ij,t}^{(u_1)}\,\partial_t\mathcal W_1 + \mathcal W_1\beta_{ij,t}^{(u_1)}\,\partial_t \mathcal W_2 \\
 = \mathcal W_2 \left( -\gamma_{ij,t}^{(u_1)}x_t - \sum \bigl(\text{components of }d_b^{(u_1)}\text{ for tetrahedra}\ne t \bigr)\right)\mathcal W_1 \\
 + \mathcal W_1 \left( \gamma_{ij,t}^{(u_1)}x_t - \sum \bigl(\text{components of }d_b^{(u_2)}\text{ for tetrahedra}\ne t \bigr)\right)\mathcal W_2 \\
 = -\sum \bigl(\text{all components of }d_b^{(u_2)} \text{ and }d_b^{(u_2)}\text{ for tetrahedra}\ne t \bigr)(\mathcal W_1 \mathcal W_2) \\
 = -\mathsf d_b (\mathcal W_1 \mathcal W_2).
\end{multline*}
So, $\mathsf d_b (\mathcal W_1 \mathcal W_2)$ is a partial derivative w.r.t.~$x_t$. Remember that $\mathcal W_1$ and~$\mathcal W_2$ are two of the three weights in the l.h.s.\ of~\eqref{1:WWW}, while the third weight~$\mathcal W_3$ does depends neither on any variables entering in~$\mathsf d_b$ nor on~$x_t$. Thus, also
\begin{equation}\label{eo:pd}
\mathsf d_b (\mathcal W_1 \mathcal W_2 \mathcal W_3) = \partial_t \bigl(-\beta_{ij,t}^{(u_1)}\mathcal W_1 \mathcal W_2 \mathcal W_3 \bigr).
\end{equation}
Finally, $x_t$ is one of the variables $x_{1234}$, $x_{1235}$ or $x_{1236}$, with respect to which the triple integral in the l.h.s.\ of~\eqref{1:WWW} is taken. It remains to integrate both sides of~\eqref{eo:pd} w.r.t.~$x_t$, and apply item~\ref{i:eo:k1} of Lemma~\ref{l:eo:k} to the l.h.s., and item~\ref{i:eo:k2} of Lemma~\ref{l:eo:k} to the r.h.s.
\end{proof}

Recall now that the l.h.s.\ and r.h.s.\ of the move 3--3 contain the same fifteen edges, and all of these belong to the boundary.

\begin{lemma}\label{l:1:r}
A function~$\mathcal F$ of nine variables~$x_t$, where $t$ runs over all boundary tetrahedra of either the l.h.s.\ or the r.h.s.\ of move 3--3, is determined uniquely, up to a multiplicative constant, by the requirement that it has a quasi-Gaussian form~\eqref{eo:qG} and is annihilated by all the fifteen edge operators\/~$\mathsf d_b$.\end{lemma}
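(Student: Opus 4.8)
The plan is to count dimensions and exhibit enough independent edge operators to pin down $\mathcal F$. First I would set up the linear-algebraic framework: a quasi-Gaussian weight of the form~\eqref{eo:qG} in the nine boundary variables $x_t$ is, up to a constant, determined by a Lagrangian (or degenerate-Lagrangian) subspace $V_L$ of the $18$-dimensional symplectic space $V=\bigoplus_t V_t$ built as in Section~\ref{s:L}, where now $t$ runs over the nine boundary tetrahedra. Being annihilated by $\mathsf d_b$ means $\mathsf d_b\in V_L$. So the statement reduces to showing that the fifteen operators $\mathsf d_b$, viewed as elements of this $18$-dimensional $V$, span a subspace of dimension exactly $9$ and that this subspace is (possibly degenerate) Lagrangian; once that is known, $V_L$ is forced to be that span, hence $\mathcal F$ is unique up to a constant.

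Next I would establish the linear dependencies among the $\mathsf d_b$. By construction~\eqref{1:bi} each $\mathsf d_b$ is obtained by dropping the inner-tetrahedron components of the pentachoron edge operators and keeping only the boundary components, and by~\eqref{et} this is literally the same operator whether computed from the l.h.s.\ or the r.h.s.\ configuration. The inner-vertex relations~\eqref{1:1c}, $\sum_{j\ne i}d_{ij}^{(u)}=0$, combine across the three pentachora of one side of the move: summing over the pentachora sharing vertex $i$, the contributions at inner tetrahedra telescope using~\eqref{1:bg} (the $\beta$'s agree and the $\gamma$'s cancel in pairs, or rather the inner components have been set to zero in $\mathsf d_b$ anyway), leaving relations of the form $\sum_{j}\mathsf d_{ij}=0$ at each of the six vertices $1,\dots,6$. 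That is six relations; one must check they have rank exactly $6$ (they are independent because each $\mathsf d_{ij}$ occurs at only two vertices and the incidence graph $K_6$ on the six vertices is connected with $15-6+1$... here one just verifies the rank of the vertex–edge incidence matrix of $K_6$ is $6$, which it is, the graph being connected and bipartite-free so the all-ones is not in the kernel). Hence $\dim\linearspan\{\mathsf d_b\}\le 15-6=9$.

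Then I would show the reverse inequality $\dim\ge 9$, i.e.\ that there are no further dependencies, and simultaneously that the span is isotropic. For isotropy: for edges $b,b'$ and a common boundary tetrahedron $t$, the partial scalar product $[\mathsf d_b,\mathsf d_{b'}]_t$ equals the pentachoron one $[d_b^{(u)},d_{b'}^{(u)}]_t$ when both edges lie in the single pentachoron $u$ carrying that boundary $t$; by~\eqref{1:c} with $c=1$ these partial products are $\pm1$ when $b,b'$ share a vertex inside $t$ and $0$ otherwise, and the total commutator $[\mathsf d_b,\mathsf d_{b'}]=\sum_t[\mathsf d_b,\mathsf d_{b'}]_t$ must be shown to vanish after summing over all boundary $t$ containing both $b$ and $b'$ — this is the combinatorial heart and I would verify it by the same sign-bookkeeping that proved~\eqref{1:1c}–\eqref{1:c} gave a Lagrangian subspace for one pentachoron, now glued. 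Given isotropy, a $9$-dimensional isotropic subspace of an $18$-dimensional symplectic space is Lagrangian, so it determines a quasi-Gaussian $\mathcal F$ uniquely; and since the already-computed weights in Theorem~\ref{th:1:33} are quasi-Gaussian, annihilated by each $\mathsf d_b$ (Lemma~\ref{l:eo:r}), and nonzero, the span cannot drop below $9$ without making $V_L$ strictly larger than Lagrangian, which is impossible. Alternatively, and more cleanly, the explicit formula for both sides of~\eqref{1:WWW} in the proof of Theorem~\ref{th:1:33} exhibits a concrete quasi-Gaussian solution, so existence is free and only uniqueness — the dimension count above — remains.

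The main obstacle I expect is the sign-accounting in verifying that $[\mathsf d_b,\mathsf d_{b'}]=0$ globally: each $\pm1$ contribution from~\eqref{1:c} carries a sign depending on whether the induced orientation of the boundary tetrahedron matches the vertex order, and one must confirm that, across the (up to two or three) boundary tetrahedra shared by $b$ and $b'$ in a given side of the move, these cancel. This is exactly the bosonic analogue of the computation that underlies the 3--3 relation, and since~\eqref{1:symm} holds (symmetry, not antisymmetry, of edge operators) the cancellation pattern differs from the fermionic case; getting it right is routine but fiddly, and it is where I would spend the care. Everything else — the rank-$6$ count of the vertex relations, the passage from "$9$-dimensional isotropic" to "Lagrangian" to "unique quasi-Gaussian weight" — is standard symplectic linear algebra.
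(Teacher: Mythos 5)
Your framework (quasi-Gaussian weights correspond to Lagrangian subspaces, annihilation means $\mathsf d_b\in V_L$, so everything reduces to showing the fifteen $\mathsf d_b$ span a $9$-dimensional isotropic subspace of the $18$-dimensional boundary space) is essentially the paper's, and your upper bound is a genuine improvement in explicitness: the six vertex relations $\sum_{j\ne i}\mathsf d_{ij}=0$ do survive the passage from pentachoron operators to cluster operators (each pentachoron relation \eqref{1:1c} vanishes componentwise, so restricting to boundary components costs nothing), and the rank-$6$ argument via the unsigned incidence matrix of $K_6$ is correct, giving $\dim\le 9$. Isotropy, incidentally, is free and needs none of the sign bookkeeping you plan to spend your care on: if $\mathsf d_b$ and $\mathsf d_{b'}$ both annihilate a nonzero weight (Lemma~\ref{l:eo:r} plus the explicit nonvanishing of either side of \eqref{1:WWW}), then the scalar $[\mathsf d_b,\mathsf d_{b'}]$ annihilates it too, hence vanishes.

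The genuine gap is the lower bound $\dim\ge 9$, i.e.\ that there are no relations beyond the six vertex ones, and your argument for it does not work. You claim ``the span cannot drop below $9$ without making $V_L$ strictly larger than Lagrangian, which is impossible,'' but if the span were, say, $8$-dimensional and isotropic, nothing forces $V_L$ beyond Lagrangian: an $8$-dimensional isotropic subspace is contained in infinitely many Lagrangian subspaces, each of which (generically) carries its own quasi-Gaussian weight annihilated by all fifteen $\mathsf d_b$. No contradiction arises --- what fails is exactly the uniqueness you are trying to prove, so the argument is circular. This dimension count is precisely the point the paper settles by a direct computer calculation, which moreover extracts a second fact you would also need for the reconstruction step: the $9$-dimensional span contains a $2$-dimensional subspace of pure multiplication operators (no differential part), which after a linear change of variables yields the factor $\delta(y_1)\delta(y_2)$, the remaining seven equations then fixing the Gaussian factor uniquely. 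Until you supply an actual proof that the vertex relations exhaust all dependencies (by computation, as in the paper, or by some structural argument you have not given), the central step of the lemma is missing.
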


\begin{proof}
A computer calculation shows that the mentioned edge operators span a nine-dimen\-sional linear space which, in its turn, contains a two-dimen\-sional subspace~$V_{\mathrm{pure}\;x}$ of ``pure $x$'s''---operators without a differential part. Now we do a linear transformation of our variables---express the $x$'s linearly in terms of new variables~$y_i$, \ $i=1,\ldots,9$, in such way that $V_{\mathrm{pure}\;x}$ is spanned simply by $y_1$ and~$y_2$. This means that~$\mathcal F$ contains a two-dimen\-sional delta function factor that can be written as $\delta(y_1)\delta(y_2)$. There remain seven more usual equations for the rest of~$y_i$, determining uniquely the Gaussian exponential factor.
\end{proof}

Lemmas \ref{l:eo:r} and~\ref{l:1:r} provide thus a `conceptual' way of proving Theorem~\ref{th:1:33}, where we don't need to calculate explicitly the l.h.s.\ and r.h.s.\ of~\eqref{1:WWW}.

\section{Two bosons on a 3-face}\label{s:2}

We consider again, as in Section~\ref{s:1}, one pentachoron~$u=12345$. But now we attach to each of its 3-faces~$t$ \emph{two} complex variables $x_t$ and~$y_t$, and the following symplectic space~$V_t$:
\[
V_t=\linearspan \left\{ \frac{\partial}{\partial x_t}, \frac{\partial}{\partial y_t}, x_t, y_t \right\},
\]
with the symplectic scalar product given by the commutator.

Thus, symplectic group $\Sp(4,\mathbb C)$ acts in each~$V_t$, and the product $\Sp(4,\mathbb C)^{\times 5}$ acts in $V_u=\bigoplus_{t\subset u} V_t$. We call the elements of this product \emph{gauge transformations}, and orbits of this action \emph{gauge equivalence classes}, in full analogy with Subsection~\ref{ss:1g}.

Generic pentachoron weight has again the form~\eqref{W}, where $F$ is now a $10\times 10$ symmetric matrix, and $\mathsf x=\begin{pmatrix}x_1&y_1&\dots&x_5&y_5\end{pmatrix}$. Operators in the big parentheses in~\eqref{ann} span a 10-dimensional Lagrangian subspace in the 20-dimensional space~$V_u$.

\subsection{Main results of this section}\label{ss:2m}

The main results of this Section are as follows:
\begin{enumerate}\itemsep 0pt
 \item\label{i:c} to within gauge equivalence, generic pentachoron weights of the form~\eqref{W} are parameterized by multiplicative 2-cocycles,
 \item\label{i:d} generic pentachoron weight is reduced, by a gauge transformation, to a product of five delta functions of linear combinations of $x_t$ and~$y_t$,
 \item\label{i:e} for a cocycle that \emph{never takes value~$-1$}, a corresponding pentachoron weight \emph{always} exists; in particular, it can be chosen in the delta function form of item~\ref{i:d},
 \item\label{i:33}relation 3--3 for generic weights.
\end{enumerate}

\begin{remark}
Note that the above items \ref{i:c} and~\ref{i:d} speak about the \emph{general position}, meaning `some unspecified Zariski open set' of weights. In contrast with these, item~\ref{i:e} gives the exact sufficient condition that guarantees that a corresponding weight can be constructed.
\end{remark}

\begin{remark}
The mentioned exact sufficient condition may be important for passing on to the \emph{finite characteristic} case, see discussion in Section~\ref{s:d}.
\end{remark}

\subsection{Parameter count}\label{ss:2h}

Symmetric $10\times 10$ matrix~$F$ contains 55 independent entries. Symplectic group $\Sp(4,\mathbb C)$ contains 10 independent parameters. The product of five such groups contains 50 parameters, but one of these does not affect the exponential---we will see it in the end of Subsection~\ref{ss:2g}. Hence, $55-49=6$ parameters in~$F$ are not gauge, and they correspond to a multiplicative 2-cocycle.

\subsection{Edge operators and their linear dependencies}\label{ss:2e}

For a given Lagrangian subspace $V_L \subset V_u$ and an edge~$b$, an edge operator $d_b$
belongs, by definition, to $V_L$, and is a linear combination of only those $\partial/\partial x_t$, $\partial/\partial y_t$, $x_t$ and~$y_t$, for which $t\supset b$.

For a generic~$F$, a two-dimensional linear space~$E_b$ of edge operators corresponds to each edge~$b$. We choose some bases $\mathsf e_b, \mathsf f_b$ in these spaces, with the understanding that
\begin{equation}\label{ijji}
\mathsf e_{ij} = \mathsf e_{ji}, \qquad \mathsf f_{ij} = \mathsf f_{ji}.
\end{equation}
There are two linear dependencies at each vertex, as, for instance, in vertex~1:
\begin{equation}\label{1}
A_{12} \begin{pmatrix}\mathsf e_{12} \\ \mathsf f_{12} \end{pmatrix} +
A_{13} \begin{pmatrix}\mathsf e_{13} \\ \mathsf f_{13} \end{pmatrix} +
A_{14} \begin{pmatrix}\mathsf e_{14} \\ \mathsf f_{14} \end{pmatrix} +
A_{15} \begin{pmatrix}\mathsf e_{15} \\ \mathsf f_{15} \end{pmatrix} = 0 ,
\end{equation}
with some $2\times 2$ matrices~$A_{12}$, \ldots, $A_{15}$. These dependencies arise because the eight operators $\mathsf e_{12},\ldots,\mathsf f_{15}$ span only a six-dimen\-sional subspace of~$V_L$, namely, subspace of operators not containing $x_{2345}$, $y_{2345}$, $\partial/\partial x_{2345}$, and $\partial/\partial y_{2345}$. For a generic~$F$, matrices~$A_{12}$, \ldots, $A_{15}$ are nondegenerate, and determined up to the left multiplication by an arbitrary nondegenerate matrix, one and the same for all of them. Due to dependence~\eqref{1}, we can define natural isomorphisms between $E_{12}$, $E_{13}$, $E_{14}$ and~$E_{15}$, acting on the basis vectors as follows:
\begin{equation}\label{1213}
A_{12} \begin{pmatrix}\mathsf e_{12} \\ \mathsf f_{12} \end{pmatrix} \leftrightarrow
A_{13} \begin{pmatrix}\mathsf e_{13} \\ \mathsf f_{13} \end{pmatrix} \leftrightarrow
A_{14} \begin{pmatrix}\mathsf e_{14} \\ \mathsf f_{14} \end{pmatrix} \leftrightarrow
A_{15} \begin{pmatrix}\mathsf e_{15} \\ \mathsf f_{15} \end{pmatrix}.
\end{equation}

We call isomorphisms~\eqref{1213} ``isomorphisms at vertex~$1$''. One of them maps basis $\begin{pmatrix}\mathsf e_{13} \\ \mathsf f_{13} \end{pmatrix}$ of~$E_{13}$ into basis $A_{13}^{-1} A_{12} \begin{pmatrix}\mathsf e_{12} \\ \mathsf f_{12} \end{pmatrix}$ of~$E_{12}$. We can define then, in a similar way, isomorphisms at vertex~$3$. One of these maps basis $\begin{pmatrix}\mathsf e_{23} \\ \mathsf f_{23} \end{pmatrix}$ of~$E_{23}$ into basis $A_{32}^{-1} A_{31} \begin{pmatrix}\mathsf e_{13} \\ \mathsf f_{13} \end{pmatrix}$ of~$E_{13}$. We can also define isomorphisms at vertex~$2$, and one of them maps basis $\begin{pmatrix}\mathsf e_{12} \\ \mathsf f_{12} \end{pmatrix}$ of~$E_{12}$ into basis $A_{21}^{-1} A_{23} \begin{pmatrix}\mathsf e_{23} \\ \mathsf f_{13} \end{pmatrix}$ of~$E_{23}$.

Combining these three isomorphisms, we get an \emph{automorphism} of~$E_{12}$ corresponding to going around triangle~$123$, given by
\begin{equation}\label{123}
\begin{pmatrix}\mathsf e_{12} \\ \mathsf f_{12} \end{pmatrix} \mapsto A_{21}^{-1} A_{23} A_{32}^{-1} A_{31} A_{13}^{-1} A_{12} \begin{pmatrix}\mathsf e_{12} \\ \mathsf f_{12} \end{pmatrix} .
\end{equation}

We can now define also similar automorphisms of~$E_{12}$ corresponding to going around triangles $124$ and~$125$, or even more complex trajectories made of pentachoron edges. Amazingly, all such automorphisms \emph{commute}, so they can, in the general case, be diagonalized simultaneously! Moreover, the two eigenvalues for any such automorphism are mutually inverse. We will see all this in Theorem~\ref{th:c} below. We would like, however, to state right here what follows immediately from these facts.

First, we can choose, for the automorphism corresponding to each triangle~$s$, \emph{one} of these eigenvalues, in a consistent way for all~$s$, so that if we denote these eigenvalues as~$\omega_s$, then they form a \emph{multiplicative 2-cocycle}:
\begin{equation}\label{cc}
\omega_s=\frac{1}{\omega_{-s}},\qquad\qquad \frac{\omega_{ijk}\omega_{ikl}}{\omega_{ijl}\omega_{jkl}}=1.
\end{equation}
Here `$-s$' means the same triangle~$s$, but with the opposite orientation.

\begin{theorem}\label{th:c}
\begin{enumerate}\itemsep 0pt
 \item\label{i:ca} For a generic~$F$, bases\/ $\mathsf e_b, \mathsf f_b$ in spaces~$E_b$ can be chosen so that \emph{all} matrices~$A_{ij}$, \ $1\le i,j\le 5$, \ $i\ne j$, can be taken diagonal. Matrices~$A_{ij}$ are of course defined according to~\eqref{1}, with an obvious change of subscripts.
 \item\label{i:cb} Moreover, if $\mathsf e_b,\mathsf f_b \in E_b$ and $\mathsf e_{b'},\mathsf f_{b'} \in E_{b'}$ are such bases for any two edges $b$ and~$b'$, then $[\mathsf e_b, \mathsf e_{b'}]_t=0$ and $[\mathsf f_b, \mathsf f_{b'}]_t=0$ for any individual tetrahedron~$t$.
 \item\label{i:cc} The two eigenvalues of automorphisms of type~\eqref{123} are mutually inverse.
\end{enumerate}
\end{theorem}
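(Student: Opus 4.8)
The plan is to first pin down, for a generic $F$, the interplay between the spaces $E_b$ of edge operators and the partial symplectic products $[\,\cdot\,,\cdot\,]_t$, and then to extract all three assertions by linear algebra. The organising observation is that an edge $b=ij$ lies in a tetrahedron $u\setminus m$ exactly when $m\notin\{i,j\}$, so that $E_{ij}=V_L\cap\bigl(\bigoplus_{m\ne i,j}V_{u\setminus m}\bigr)$ and hence $E_{ij}=W_i\cap W_j$, where $W_i:=V_L\cap\bigl(\bigoplus_{m\ne i}V_{u\setminus m}\bigr)$ is the six-dimensional space carrying the dependence~\eqref{1}. Since $V_L$ is Lagrangian, for any two edges of $u$ the commutator $[d,d']$ picks up contributions only from the tetrahedra containing both: for opposite edges $ij$, $kl$ of a tetrahedron $t=ijkl$ the only such tetrahedron is $t$ itself, so $[d,d']_t=0$ for $d\in E_{ij}$, $d'\in E_{kl}$, whereas for two edges $ij$, $ik$ meeting at a vertex one gets $[d,d']_{u\setminus l}=-[d,d']_{u\setminus m}$ on the two common tetrahedra. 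A genericity check --- most efficiently carried out once with a symbolic $F$, in the spirit of the ``directly checked'' relations around~\eqref{1:c} --- then shows that each $E_{ij}|_t$ is a two-dimensional \emph{symplectic} subspace of $V_t$, so that opposite edges give a symplectic-orthogonal splitting $V_t=E_{ij}|_t\oplus E_{kl}|_t$, with the three such splittings of $V_t$ mutually transverse. This is the bosonic two-variable analogue of~\eqref{1:c}.

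Given this, I would prove~(i) and~(ii) together. Working on the fixed edge $12$, form the automorphisms $M_{123}$, $M_{124}$, $M_{125}$ of $E_{12}$ of type~\eqref{123} and express the matrices $A_{1k}^{-1}A_{1l}$, $A_{3k}^{-1}A_{3l}$, \dots\ entering them through the partial products $[\mathsf e_b,\mathsf e_{b'}]_t$, $[\mathsf e_b,\mathsf f_{b'}]_t$, $[\mathsf f_b,\mathsf f_{b'}]_t$ at the relevant common tetrahedra, by pairing the vertex dependences~\eqref{1} with $\mathsf e_{b'},\mathsf f_{b'}$ inside a $V_t$ and using nondegeneracy of $[\,\cdot\,,\cdot\,]_t$ on $E_{b'}|_t$. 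The symplectic-orthogonality relations just established then make enough of these products vanish that the three $M_s$ become \emph{simultaneously} conjugate, by a single change of basis of $E_{12}$, to diagonal matrices --- equivalently, they pairwise commute; their eigenvalues are generically distinct, which produces the common eigenbasis. Transporting this eigenbasis to every $E_b$ by the vertex isomorphisms~\eqref{1213} gives bases in which all $A_{ij}$ are diagonal; the transport is consistent precisely because the $2$-face holonomies $M_s$ now all lie in one maximal torus, i.e.\ because the simply connected $2$-skeleton of $\Delta^4$ leaves no obstruction to a global ``diagonal gauge'' beyond these holonomies. With all $A_{ij}$ diagonal, writing $[\mathsf e_b,\mathsf e_{b'}]_t$ and $[\mathsf f_b,\mathsf f_{b'}]_t$ in these bases and invoking the decomposition of $V_t$ above (the $\mathsf e$'s span one Lagrangian of a transverse pair filling $V_t$, the $\mathsf f$'s the other) yields~(ii).

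For~(iii), once~(i) holds $M_{ijk}=\diag(\omega_{ijk},\omega'_{ijk})$, so the claim is $\det M_{ijk}=1$; and since $\det M_{ijk}=\dfrac{\det A_{ij}}{\det A_{ji}}\cdot\dfrac{\det A_{jk}}{\det A_{kj}}\cdot\dfrac{\det A_{ki}}{\det A_{ik}}$, it is equivalent to show $\det A_{pq}/\det A_{qp}$ is a multiplicative coboundary, $\det A_{pq}/\det A_{qp}=\nu_p/\nu_q$. This I would again get from the partial-product structure: comparing, on the two common tetrahedra $u\setminus l$ and $u\setminus m$ of $ij$ and $ik$, the two presentations of the vertex-$i$ dependence relates $\det A_{ij}$ to $\det A_{ik}$ through a Gram-type determinant of partial products, and these Gram factors telescope away when one goes around the triangle $ijk$. (Alternatively, $M_{ijk}$ preserves the nonzero skew $2$-form induced on $E_{ij}$ by the relevant partial product, hence has determinant~$1$.) The real obstacle in the whole argument is the commutativity used in the middle step: nothing about a generic connection on the $1$-skeleton $K_5$ forces its $2$-face holonomies to commute, so this has to be wrung out of the symplectic-orthogonality relations, and verifying that those relations are strong enough is the point where a careful, most likely computer-assisted, computation is unavoidable; everything else is bookkeeping with $2\times2$ matrices and the fact that commuting semisimple maps share an eigenbasis.
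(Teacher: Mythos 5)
Your preliminary observations are sound and coincide with the paper's Lemma~\ref{l:k}: because $V_L$ is Lagrangian, opposite edges of a tetrahedron~$t$ give $[\,\cdot\,,\cdot\,]_t=0$, and edges sharing a vertex give opposite partial products on their two common tetrahedra; generically each $E_b|_t$ is a nondegenerate two-dimensional subspace. The genuine gap is in the middle step, and you name it yourself: the simultaneous diagonalizability of all $A_{ij}$ is logically equivalent to the commutativity of the holonomy automorphisms of type~\eqref{123}, and your argument takes that commutativity as an input, deferring it to an unspecified ``computer-assisted'' verification that the orthogonality relations are ``strong enough''. That is precisely the nontrivial content of item~(i), so nothing has actually been proved. (There is also a secondary issue: even granting commutativity of $M_{123},M_{124},M_{125}$, a global diagonal gauge needs the \emph{whole} holonomy group based at $E_{12}$ --- six independent loops, not only the three triangles through edge~$12$ --- to be a commuting family; the appeal to the simply connected $2$-skeleton does not substitute for this.)

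The paper closes this gap by reversing your logical order. It works inside a single $V_t$: by Lemma~\ref{l:k}, the subspaces $E_b|_t$ for the three edges of $t$ missing a given vertex are the symplectic orthocomplements of those for the three edges through it, so after normalizing one subspace and diagonalizing one generic $2\times2$ matrix (Lemma~\ref{l:d}), \emph{all six} subspaces simultaneously acquire the split form~\eqref{f}; re-defining $\mathsf e_b,\mathsf f_b$ accordingly and invoking Lemma~\ref{l:1234} forces the $A_{ij}$ to be diagonal, item~(ii) follows from the same splitting together with Lemma~\ref{l:k}, and commutativity of the holonomies is then a corollary rather than a hypothesis. For item~(iii) your two suggested routes (telescoping determinants of the $A_{ij}$, or an invariant skew form on $E_{12}$) are only sketched and not substantiated --- you do not say which form is preserved or why; note the paper itself proves this item (Lemma~\ref{l:rm}, determinant equal to~$1$) by a direct computer calculation and explicitly states it has no conceptual proof, so a completed version of your second suggestion would be an improvement, but as it stands it is an assertion, not an argument.
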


\begin{proof}

First, we prove a few lemmas.

\begin{lemma}\label{l:1234}
To determine matrices $A_{12}$, $A_{13}$ and~$A_{14}$ in~\eqref{1}, it is enough to take the components of edge operators for edges 12, 13 and~14, corresponding to the space $V_{1234} = \linearspan\{x_{1234},y_{1234},\partial/\partial x_{1234},\partial/\partial y_{1234}\}$. That is, they are determined from the following relation:
\begin{equation}\label{1234}
A_{12} \begin{pmatrix}\mathsf e_{12}|_{1234} \\ \mathsf f_{12}|_{1234} \end{pmatrix} +
A_{13} \begin{pmatrix}\mathsf e_{13}|_{1234} \\ \mathsf f_{13}|_{1234} \end{pmatrix} +
A_{14} \begin{pmatrix}\mathsf e_{14}|_{1234} \\ \mathsf f_{14}|_{1234} \end{pmatrix} = 0.
\end{equation}
\end{lemma}

\begin{remark}
Remember that these matrices are determined to within a left multiplication by a nondegenerate matrix.
\end{remark}

\begin{remark}
The same matrices $A_{12}$ and~$A_{13}$---taken up to a left multiplication, so it may be better to speak about the product $A_{12}^{-1}A_{13}$---appear, of course, also from the tetrahedron~$1235$.
\end{remark}

\begin{proof}[Proof of Lemma~\ref{l:1234}]
This follows from the fact that $V_{1234}$ is four-dimen\-sional, so there must be two dependencies between $\mathsf e_{12}|_{1234},\ldots,\mathsf f_{14}|_{1234}$, and they must obviously give the same matrices as in~\eqref{1}, except for~$A_{15}$ that is irrelevant for tetrahedron~$1234$.
\end{proof}

\begin{lemma}\label{l:k}
Let $\mathsf d_b$, for each edge~$b$, be an arbitrary linear combination of $\mathsf e_b$ and~$\mathsf f_b$---that is, an arbitrary element of~$E_b$. Then,
\begin{enumerate}\itemsep 0pt
 \item\label{i:ka} if $b$ and~$b'$ are nonintersecting edges in tetrahedron~$t$, then $[\mathsf d_b,\mathsf d_{b'}]_t=0$,
 \item\label{i:kb} if $b$ and~$b'$ have a common vertex, and $t$ and~$t'$ are two tetrahedra, each containing both $b$ and~$b'$, then $[\mathsf d_b,\mathsf d_{b'}]_t+[\mathsf d_b,\mathsf d_{b'}]_{t'}=0$. For instance, $[\mathsf d_{12},\mathsf d_{13}]_{1234}+[\mathsf d_{12},\mathsf d_{13}]_{1235}=0$.
\end{enumerate}
\end{lemma}

\begin{proof}
The full commutator of edge operators always vanishes: $[\mathsf d_b,\mathsf d_{b'}]=0$. In the case~\ref{i:ka}, it is composed of a single summand, corresponding to tetrahedron~$t$. In the case~\ref{i:kb}, it is composed of two summands, corresponding to tetrahedra $t$ and~$t'$.
\end{proof}

We now consider the following two-dimen\-sional linear subspaces in~$V_{1234}$, corresponding to edges of tetrahedron~$1234$: for each edge~$ij\subset 1234$, we take the space spanned by the $1234$-components of $\mathsf e_{ij}$ and~$\mathsf f_{ij}$. According to item~\ref{i:kb} of Lemma~\ref{l:k}, subspaces corresponding to edges $34$, $24$ and~$23$ are the \emph{orthogonal complements} of subspaces corresponding to edges $12$, $13$ and~$14$, respectively. Taking a suitable basis in~$V_{1234}$, we can identify vectors in~$V_{1234}$ with 4-rows~$\mathsf v$ in such way that the 1234-component of the symplectic scalar product $\langle \mathsf u, \mathsf v \rangle$ will be written as
\begin{equation}\label{sc}
\mathsf u \begin{pmatrix} 0&1&0&0\\ -1&0&0&0\\ 0&0&0&1\\ 0&0&-1&0\end{pmatrix} \mathsf v^{\mathrm T}.
\end{equation}

A generic two-dimen\-sional linear subspace in~$V_{1234}$ is then the span of the rows of a $2\times 4$ matrix of the form
\begin{equation}\label{B}
\mathsf B = \begin{pmatrix}\begin{matrix}1&0\\0&1\end{matrix} & \mathlarger{\mathlarger{\mathlarger {B}}} \end{pmatrix}.
\end{equation}

\begin{lemma}\label{l:d}
Matrices~$B$ in~\eqref{B} can be made diagonal for all six edges 12, \ldots, 34 simultaneously, by a change of basis in~$V_{1234}$ preserving the form~\eqref{sc} of the scalar product.
\end{lemma}

\begin{proof}
If $M$ and~$N$ are two $2\times 2$ matrices with $\det M=\det N=1$, then  matrix $\mathsf B$~\eqref{B} can always be transformed as follows:
\begin{equation}\label{t}
\mathsf B \mapsto M \mathsf B \begin{pmatrix} M^{-1}&0\\ 0&N \end{pmatrix},
\end{equation}
without changing the form~\eqref{sc} of the symplectic product. Moreover, $B$ can be chosen as zero matrix for one edge, say~12. Then, an easy exercise, using the fact that a generic square matrix over the field~$\mathbb C$ can always be diagonalized, shows that matrices~$B$ for edges 13 and~14 can be made diagonal by a transformation~\eqref{t} with properly chosen $M$ and~$N$.

So, matrices~$\mathsf B$ for edges 12, 13 and~14 acquire the form
\begin{equation}\label{f}
\mathsf B = \begin{pmatrix} *&0&*&0\\ 0&*&0&* \end{pmatrix}.
\end{equation}
Then, matrices~$B$ for the rest of edges have the same form~\eqref{f} automatically, because, as we already mentioned above, each of their corresponding spaces is the orthogonal complement to the space spanned by the rows of a matrix of the form~\eqref{f}, with respect to the symplectic scalar product~\eqref{sc}.
\end{proof}

It follows from Lemma~\ref{l:d} that we can \emph{re-define} our basis vectors $\mathsf e_b$ and~$\mathsf f_b$ for all edges $b\subset 1234$ in such way that their $1234$-components acquire the forms
\begin{equation}\label{e}
\mathsf e_b|_{1234}=\begin{pmatrix}*&0&*&0\end{pmatrix},\qquad \mathsf f_b|_{1234}=\begin{pmatrix}0&*&0&*\end{pmatrix}
\end{equation}
in a proper basis in~$V_{1234}$. We assume now that we \emph{have re-defined} them this way. It can also be checked that, generically, there are no proportional vectors among vectors~\eqref{e} for $b=12$, $13$ and~$14$. It follows then from Lemma~\ref{l:1234} that we can assume that all three matrices $A_{12}$, $A_{13}$ and~$A_{14}$ are diagonal: indeed, we can always put, say, $A_{12}$ equal to identity matrix, then $A_{13}$ and~$A_{14}$ become necessarily diagonal because $\mathsf e_{12}|_{1234}$ can be a linear combination only of $\mathsf e_{13}|_{1234}$ and $\mathsf e_{14}|_{1234}$, and similarly for $\mathsf f_{12}|_{1234}$.

The same reasoning with vertex~$1$ replaced first by~$2$ and then by~$3$ shows, in particular, that matrices $A_{21}$, $A_{23}$, $A_{31}$ and~$A_{32}$ can all be also taken diagonal, for the \emph{same} basis vectors $\mathsf e_b, \mathsf f_b$. So, for our basis vectors chosen as in~\eqref{e}, all matrices in~\eqref{123} become diagonal.

Now, it becomes an easy exercise to show that actually \emph{all} matrices~$A_{ij}$, \ $1\le i,j\le 5$, \ $i\ne j$, can be made diagonal simultaneously by choosing proper bases $\mathsf e_b, \mathsf f_b$ in all spaces~$E_b$, and this proves item~\ref{i:ca} in Theorem~\ref{th:c}. And from now on, we \emph{assume that we have re-defined all} $\mathsf e_b$ and~$\mathsf f_b$ this way.

Item~\ref{i:cb} in Theorem~\ref{th:c} is now proved as follows. For $t=1234$, it holds due to the form~\eqref{e} of edge operator components. We explain how to prove it, for instance, for $t=1235$ and operators~$\mathsf e_b$.

First, the $1235$-component of the scalar product between any two of $\mathsf e_{12}$, $\mathsf e_{13}$ and~$\mathsf e_{23}$ vanishes due to item~\ref{i:kb} in Lemma~\ref{l:k} (and vanishing of the $1234$-component). Next, in our situation, any of $\mathsf e_{35}|_{1235}$, $\mathsf e_{25}|_{1235}$ and~$\mathsf e_{15}|_{1235}$ is a linear combination of $\mathsf e_{12}|_{1235}$, $\mathsf e_{13}|_{1235}$ and~$\mathsf e_{23}|_{1235}$---this follows from the analogues of~\eqref{1234} for tetrahedron~$1235$ and vertices 1, 2 and~3, and the diagonality of matrices~$A_{ij}$, and this proves~\ref{i:cb}.

Finally, item~\ref{i:cc} in Theorem~\ref{th:c} holds due to the following lemma.

\begin{lemma}\label{l:rm}
The determinant of automorphism~\eqref{123} equals~$1$.
\end{lemma}

\begin{proof}
Direct computer calculation.
\end{proof}

\begin{remark}
In this paper, Lemma~\ref{l:rm} is the only one that has no conceptual proof as yet.
\end{remark}

Lemma~\ref{l:rm} completes the proof of Theorem~\ref{th:c}.

\end{proof}

\subsection[Gauge equivalence of pentachoron weights with the same~$\omega$]{Gauge equivalence of pentachoron weights with the same~$\boldsymbol{\omega}$}\label{ss:2g}

Recall that multiplicative 2-cocycle~$\omega$ was introduced in the paragraph before Theorem~\ref{th:c}.

\begin{theorem}\label{th:g}
Pentachoron weights with the same~$\omega$ are gauge equivalent.
\end{theorem}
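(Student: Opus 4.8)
The plan is to show that the multiplicative $2$-cocycle $\omega$ is a \emph{complete} invariant of the gauge-equivalence class, by exhibiting an explicit gauge transformation connecting any two weights sharing the same $\omega$. By Theorem~\ref{th:c}\ref{i:ca} we may assume from the outset that, for both weights, bases $\mathsf e_b, \mathsf f_b \in E_b$ have been chosen so that all matrices $A_{ij}$ are diagonal; this is the normal form from which I shall work. The strategy mirrors Subsection~\ref{ss:1gi}: I want to build, for each tetrahedron $t=ijkl$, an element of $\Sp(4,\mathbb C)$ acting in $V_t$ that matches the relevant edge-operator components of one Lagrangian subspace to those of the other, and then take the direct sum of these five elements. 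By Theorem~\ref{th:c}\ref{i:cb}, the pairs $\bigl(\mathsf e_b|_t\bigr)_{b\subset t}$ and $\bigl(\mathsf f_b|_t\bigr)_{b\subset t}$ span two transverse Lagrangian (in the $2$-dimensional sense, i.e.\ isotropic) lines-worth of data inside $V_t$, arranged exactly as in~\eqref{e}; so the $t$-component data is, up to the diagonal freedom, captured by the eigenvalue-type parameters living on the faces of $t$.

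The key steps, in order, would be: (1) Fix the residual gauge freedom. The normalization ``all $A_{ij}$ diagonal'' still leaves: a simultaneous left-multiplication of $A_{i\bullet}$ by one diagonal matrix per vertex, the choice of scale of each $\mathsf e_b$ and each $\mathsf f_b$, and the change of basis in each $V_t$ preserving the split form~\eqref{sc}. Count these against the free parameters in $F$, as in Subsection~\ref{ss:2h}: $50$ gauge parameters, one inert, $55$ entries of $F$, leaving $6$, precisely the number of independent values of a multiplicative $2$-cocycle on the $\binom{5}{3}=10$ triangles of $u$ modulo the $4$ coboundary relations~\eqref{cc} (second equation) at the $5$ tetrahedra — so once $\omega$ is fixed, no moduli remain. (2) Make this rigorous: show the diagonal entries of the $A_{ij}$ are, after using up the residual diagonal gauge, expressible through the eigenvalues $\omega_s$ alone. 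Concretely, going around triangle $ijk$ produces the automorphism~\eqref{123}, a product of $A_{ij}^{-1}A_{ik}$-type terms; its eigenvalues are $\omega_{ijk}^{\pm1}$ by Lemma~\ref{l:rm} and Theorem~\ref{th:c}\ref{i:cc}. So the ratios of diagonal entries of the $A$'s around each triangle are pinned by $\omega$; the cocycle condition at tetrahedra guarantees these constraints are consistent, and the remaining overall scalings are exactly the residual gauge. (3) Construct the gauge map: given two weights with the same $\omega$, step~(2) lets us choose bases making \emph{all} $A_{ij}$ literally equal (not just diagonal with the same eigenvalues) and, simultaneously, normalize the $V_t$-bases so that $\mathsf e_b|_t$ and $\mathsf f_b|_t$ from~\eqref{e} agree for both weights. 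Since a generic $V_L$ is the span of its ten edge operators (the two-boson analogue of the statement in Subsection~\ref{ss:1e}, which holds for generic $F$ by the dimension count $10 = \dim V_L$ and the fact that the $E_b$ are generically $2$-dimensional and independent enough), matching all edge operators forces the two Lagrangian subspaces to coincide; reading off the five $\Sp(4,\mathbb C)$ blocks that effected the basis matching gives the desired gauge transformation.

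The main obstacle I anticipate is step~(2): turning the eigenvalue statement of Theorem~\ref{th:c} into an honest parameterization of the diagonal entries of the $A_{ij}$ by $\omega$ and residual gauge. The subtlety is that the automorphism~\eqref{123} is built from products like $A_{21}^{-1}A_{23}A_{32}^{-1}A_{31}A_{13}^{-1}A_{12}$; when all these matrices are diagonal this is just a product of scalars on each of the two coordinate axes, so its eigenvalues are visible — but one must check that the assignment $s\mapsto\omega_s$ obtained this way really is a cocycle in the sense of~\eqref{cc} (the tetrahedral relation) and, conversely, that \emph{every} cocycle value is realized by some choice of diagonal $A$'s, i.e.\ that the map from (normalized) $A$-data to cocycles is onto with fibers exactly the residual-gauge orbits. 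This is essentially a bookkeeping argument about $\mathbb C^*$-valued cochains on the $2$-skeleton of a $4$-simplex — the relevant chain complex is small and explicit — but it is where the real content sits; the rest is the same canonical-basis maneuver already used for Theorem~\ref{th:1g}, lifted from $\Sp(2,\mathbb C)$ to $\Sp(4,\mathbb C)$.

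I would also note, for safety, that genericity is used three times — that $\dim E_b = 2$ with nondegenerate $A_{ij}$ (stated in Subsection~\ref{ss:2e}), that the vectors~\eqref{e} have no proportionalities (stated in the proof of Theorem~\ref{th:c}), and that $V_L$ is recovered from its edge operators — so ``gauge equivalent'' here, as flagged in the Remark after Subsection~\ref{ss:2m}, means on a Zariski-open set of weights with the given $\omega$; I would make the open condition explicit as the non-vanishing of the same determinants that appear in Theorem~\ref{th:c}.
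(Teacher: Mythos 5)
Your overall architecture is the same as the paper's: normalize the bases $\mathsf e_b,\mathsf f_b$ so that all $A_{ij}$ become diagonal with entries depending only on~$\omega$ (the paper does this concretely by choosing a multiplicative 1-cochain $\nu$ with $\delta\nu=\omega$ and imposing the form \eqref{Aij}, \eqref{gm}), then extract `standard' bases of each $V_t$ from the edge-operator components (the explicit columns~\eqref{6b}, which is where condition~\eqref{ne-1} enters), and define the gauge transformation block-wise as the map carrying one standard basis to the other. Your step (2), which you flag as the main obstacle, is indeed handled in the paper exactly by the cochain bookkeeping you describe.

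The genuine gap is in your step (3). Matching the $A_{ij}$ and the components $\mathsf e_b|_t,\mathsf f_b|_t$ of the two weights only determines, in each $V_t$, a \emph{linear} isomorphism taking one family of edge-operator components to the other; for it to be a gauge transformation it must lie in $\Sp(4,\mathbb C)$, i.e.\ the symplectic pairings among the matched vectors must literally coincide for the two weights. The pairings $[\mathsf e_b,\mathsf e_{b'}]_t$ and $[\mathsf f_b,\mathsf f_{b'}]_t$ vanish by item~\ref{i:cb} of Theorem~\ref{th:c}, but the cross-commutators $[\mathsf e_b,\mathsf f_{b'}]_t$ are a priori weight-dependent, and they scale with the product of the two overall factors still free in the normalization of the $\mathsf e$'s and $\mathsf f$'s. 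This is precisely the content of Lemma~\ref{l:comm} in the paper: using the linear dependencies and again the non-vanishing of combinations such as $(\omega_{123}+1)\nu_{13}$, all cross-commutators are shown to be determined by~$\nu$ up to a \emph{single} overall scalar, which is then killed by fixing a standard commutator normalization; only after that are the five block maps honest elements of $\Sp(4,\mathbb C)$, and the leftover scalar is the one-parameter family of gauge transformations fixing~$\mathcal W$ that Subsection~\ref{ss:2h} (and your count) treats as inert. Your residual-gauge accounting in steps (1)--(2) pins the diagonal entries of the $A_{ij}$ but never addresses this pairing normalization, and without it the final ``read off the five $\Sp(4,\mathbb C)$ blocks'' step does not go through as stated.
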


\begin{proof}
We must fix some `standard' bases in all~$V_t$, so that all edge operators have components w.r.t.\ these bases depending only on~$\omega$.

Theorem~\ref{th:c} shows how to choose edge operators $\mathsf e_b$ and~$\mathsf f_b$, and recall that we have chosen them (between Lemmas \ref{l:d} and~\ref{l:rm}) exactly that way. There still remains some arbitrariness: $\mathsf e_b$ and~$\mathsf f_b$ may be multiplied by any constants. We will fix most of this arbitrariness by imposing a special form on matrices~$A_{ij}$. Recall that they have the form
\begin{equation}\label{Aij}
A_{ij}=\begin{pmatrix}\gamma_{ij}&0\\ 0&\gamma_{ij}^{-1}\end{pmatrix}.
\end{equation}
We set
\begin{equation}\label{gm}
\gamma_{ij}=\begin{cases}\nu_{ij}&\text{if \ }i<j,\\ 1&\text{if \ }i>j.\end{cases}
\end{equation}
Here $\nu$ is such a multiplicative 1-chain that $\delta \nu=\omega$. This means that $\nu_{ij}=\nu_{ij}^{-1}$, and $\omega_{ijk}=\nu_{ij}\nu_{ik}^{-1}\nu_{jk}$. Of course, $\nu_{ij}$ never vanishes. Also, we will assume that $\nu$ has been built from~$\omega$ in some `standard' way, for instance,
\[
\nu_{ij}=1 \text{ \ for \ } i=1, \text{ \ and \ } \nu_{ij}=\omega_{1ij} \text{ \ for \ } i,j\ne 1.
\]

A small exercise shows that matrices~$A_{ij}$ can always be brought into the form \eqref{Aij}, \eqref{gm} using the mentioned arbitrariness in edge operators.

We have linear dependencies
\begin{equation}\label{h}
\gamma_{12}\mathsf e_{12}+\dots+\gamma_{15}\mathsf e_{15}=0
\end{equation}
and so on. Note that all such dependencies \emph{fix the normalization of all operators~$\mathsf e_b$ to within one overall common factor}.

Due to the dependencies of type~\eqref{h}, a linear combination
\begin{equation}\label{lc}
\lambda_{12}\mathsf e_{12}|_{1234}+\dots+\lambda_{34}\mathsf e_{34}|_{1234}
\end{equation}
will vanish if the row
\[
\begin{pmatrix} \lambda_{12} & \lambda_{13} & \lambda_{14} & \lambda_{23} & \lambda_{24} & \lambda_{34} \end{pmatrix}
\]
is proportional to any row of matrix
\begin{equation}\label{g4}
\begin{pmatrix} \gamma_{12} & \gamma_{13} & \gamma_{14} & 0 & 0 & 0 \\
                \gamma_{21} & 0 & 0 & \gamma_{23} & \gamma_{24} & 0 \\
                0 & \gamma_{31} & 0 & \gamma_{32} & 0 & \gamma_{34} \\
                0 & 0 & \gamma_{41} & 0 & \gamma_{42} & \gamma_{43} \end{pmatrix}.
\end{equation}

Recall~\eqref{e} that all operators~$\mathsf e_b|_{1234}$ span a two-dimen\-sional space. If $\mathrm e_{\xi}^{(1234)}$ and $\mathrm e_{\eta}^{(1234)}$ denote two basis vectors in it, then we can write
\begin{equation}\label{b}
\mathsf e_b|_{1234} = \xi_b^{(1234)} \mathrm e_{\xi}^{(1234)} + \eta_b^{(1234)} \mathrm e_{\eta}^{(1234)},
\end{equation}
and call $\xi_b^{(1234)}$ and~$\eta_b^{(1234)}$ the \emph{components} of~$\mathsf e_b|_{1234}$. Writing~\eqref{lc} in these components, we find that a column made of either $\xi_{12}^{(1234)},\dots,\xi_{34}^{(1234)}$ or $\eta_{12}^{(1234)},\dots,\eta_{34}^{(1234)}$ must be orthogonal to the rows of matrix~\eqref{g4}. These columns must not be proportional, but otherwise they can be chosen arbitrarily---this corresponds to the arbitrariness of basis $\mathrm e_{\xi}^{(1234)},\mathrm e_{\eta}^{(1234)}$. For instance, they can be written as follows:
\begin{equation}\label{6b}
\begin{pmatrix}\xi_{12}^{(1234)}\\ \xi_{13}^{(1234)}\\ \xi_{14}^{(1234)}\\ \xi_{23}^{(1234)}\\ \xi_{24}^{(1234)}\\ \xi_{34}^{(1234)} \end{pmatrix}=
\begin{pmatrix}0\\[.4ex]
\nu_{14} ( \nu_{23} \nu_{34}+\nu_{24}) \\[.4ex]
-\nu_{13} ( \nu_{23} \nu_{34}+\nu_{24}) \\[.4ex]
-\nu_{24} ( \nu_{13} \nu_{34}+\nu_{14}) \\[.4ex]
\nu_{23} ( \nu_{13} \nu_{34}+\nu_{14}) \\[.4ex]
\nu_{13} \nu_{24}-\nu_{14} \nu_{23}\end{pmatrix},
\qquad
\begin{pmatrix}\eta_{12}^{(1234)}\\ \eta_{13}^{(1234)}\\ \eta_{14}^{(1234)}\\ \eta_{23}^{(1234)}\\ \eta_{24}^{(1234)}\\ \eta_{34}^{(1234)} \end{pmatrix}=
\begin{pmatrix}-\nu_{13} ( \nu_{23} \nu_{34}+\nu_{24}) \\[.4ex]
\nu_{12} ( \nu_{23} \nu_{34}+\nu_{24}) \\[.4ex]
0\\[.4ex]
\nu_{13} \nu_{34}-\nu_{12} \nu_{24}\\[.4ex]
\nu_{12} \nu_{23}+\nu_{13}\\[.4ex]
-\nu_{12} \nu_{23}-\nu_{13}\end{pmatrix},
\end{equation}
\emph{if the following condition holds:}
\begin{equation}\label{ne-1}
\omega_s\ne-1 \text{ \ for all triangles }s.
\end{equation}
Condition~\eqref{ne-1} arises here because we meet in~\eqref{6b} expressions like $\nu_{23} \nu_{34}+\nu_{24}=(\omega_{234}+1)\nu_{24}$ which we want not to vanish. This condition looks important, and it will arise below again and again on various occasions.

\begin{remark}
Nevertheless, we consider in paper~\cite{cubic} the case $\omega_s \equiv 1$ in \emph{finite characteristics}. Interestingly, even in characteristic two where this means the same as $\omega_s \equiv -1$, the constructions of~\cite{cubic} work well.
\end{remark}

In the same way as we have done for $t=1234$, we can fix similar bases $\mathrm e_{\xi}^{(t)},\mathrm e_{\eta}^{(t)}$ for other tetrahedra~$t=ijkl$, \ $i<j<k<l$, and write out all $t$-components $\xi_{ij}^{(ijkl)}$ and~$\eta_{ij}^{(ijkl)}$ for all operators~$\mathsf e_b$. Then we must choose the remaining two basis vectors in each~$V_{ijkl}$. We do it as follows: we do the same construction as above in this Subsection, but for vectors~$\mathsf f_b$ instead of~$\mathsf e_b$, and replacing, accordingly, all~$\nu_{ij}$ with~$\nu_{ij}^{-1}$. We call these new vectors $\mathrm f_{\xi}^{(t)}$ and~$\mathrm f_{\eta}^{(t)}$.

`Standard' bases in all~$V_{ijkl}$ are almost ready. The remaining arbitrariness is as follows. As we have mentioned after formula~\eqref{h}, there is one overall factor for vectors~$\mathsf e_b$; for a given~$\mathcal W$, all~$\mathsf e_b$ can be multiplied by that factor without affecting the above construction. There is of course a similar factor for all~$\mathsf f_b$. As we will see now, these two factors are not independent, because, clearly, their \emph{product} enters in the \emph{commutators}~$[\mathsf e_b,\mathsf f_{b'}]$.

We are going to see that the commutators between basis vectors also have `standard' values, depending only on~$\nu_{ij}$ and one overall factor, but otherwise independent of a specific pentachoron weight~$\mathcal W$.

\begin{lemma}\label{l:comm}
If $\omega_s\ne -1$ for all triangles~$s$, then all commutators $[\mathrm e_{\ldots}^{(t)}, \mathrm f_{\ldots}^{(t)}]$, where the dots must be replaced arbitrarily by $\xi$ or/and~$\eta$, are determined by values~$\nu_{ij}$ uniquely up to an overall common factor.
\end{lemma}

\begin{proof}
Due to the linear dependencies
\[
\sum_{j=1}^5 \gamma_{ij}\mathsf e_{ij}=0,\qquad \sum_{j=1}^5 \gamma_{ij}^{-1}\mathsf f_{ij}=0,\qquad i=1,\ldots,5,
\]
for edge operators, there are also dependencies for commutators. We can begin with considering commutators
\begin{equation}\label{cm}
[\mathsf e_b, \mathsf f_{12}]_{1234}, \qquad b\subset 1234.
\end{equation}
As such commutator for $b=34$ vanishes, there are essentially five of them, and there are four linear dependencies at vertices $1,\ldots,4$. Hence, there remains one degree of freedom: the vector made of five commutators~\eqref{cm} is proportional to a vector explicitly expressed through~$\gamma_{ij}$. For $\gamma_{ij}$ as in~\eqref{gm}, it looks as follows:
\begin{multline*}
\begin{pmatrix} [\mathsf e_{12}, \mathsf f_{12}]_{1234} & [\mathsf e_{13}, \mathsf f_{12}]_{1234} & [\mathsf e_{14}, \mathsf f_{12}]_{1234} & [\mathsf e_{23}, \mathsf f_{12}]_{1234} & [\mathsf e_{24}, \mathsf f_{12}]_{1234} \end{pmatrix} = \const \cdot \\
\begin{pmatrix} \nu_{14}\nu_{23}-\nu_{13}\nu_{24} & \nu_{12}\nu_{24}+\nu_{14} & -\nu_{12}\nu_{23}-\nu_{13} & -\nu_{12}\nu_{24}-\nu_{14} & \nu_{12}\nu_{23}+\nu_{13} \end{pmatrix}.
\end{multline*}
As, for instance, $\nu_{12}\nu_{23}+\nu_{13}=(\omega_{123}+1)\nu_{13}$, the vector components are not all zero.

Then we can find similarly that there is only one degree of freedom for all commutators $[\mathsf e_b, \mathsf f_{b'}]_{1234}$, and then extend this to all tetrahedra using item~\ref{i:kb} in Lemma~\ref{l:k}.

As our basis vectors $\mathrm e_{\ldots}^{(t)}$ and~$\mathrm f_{\ldots}^{(t)}$ can be expressed as linear combinations of $\mathsf e_b|_t$ and~$\mathsf f_b|_t$ (using \eqref{b} and~\eqref{6b}), with coefficients expressed through~$\nu_{ij}$, the statement about one degree of freedom holds for them as well.
\end{proof}

We can now choose some `standard' normalization for commutators---for instance, set one of them to equal~1. All other commutators depend then only on~$\omega$. Recall that there are also two overall factors for $\mathsf e_b$ and~$\mathsf f_b$, but their product is fixed by our choice of standard commutator normalization. So, there remains one free parameter, and this yields a one-para\-metric family of gauge transformations that does not change~$\mathcal W$, already mentioned in Subsection~\ref{ss:2h}.

We have thus obtained `standard' bases in all~$V_t$. The components of our gauge isomorphism for two different~$F$ with the same~$\omega$ map, by definition, one such  `standard' basis into the other, for each~$V_t$.
\end{proof}

\subsection[Constructing a delta function pentachoron weight from a given~$\omega$t]{Constructing a delta function pentachoron weight from a given~$\boldsymbol{\omega}$}\label{ss:2d}

In the previous Subsection, operators $\mathrm e_{\ldots}^{(t)}$ and~$\mathrm f_{\ldots}^{(t)}$ were obtained from a given Gaussian pentachoron weight. Now we would like to \emph{set} the first half of these operators equal to multiplication operators:
\[
\mathrm e_{\xi}^{(t)}=x_t,\qquad \mathrm e_{\eta}^{(t)}=y_t,
\]
and the second half equal to pure differentiations, and \emph{construct} the corresponding weight~$\mathcal W$.

Our edge operators~$\mathsf e_b$ are
\begin{equation}\label{Axy}
\begin{pmatrix} \mathsf e_{12} \\ \vdots \\ \mathsf e_{45} \end{pmatrix} = A \begin{pmatrix} x_{2345} \\ y_{2345} \\ \vdots \\ x_{1234} \\ y_{1234} \end{pmatrix},
\end{equation}
where matrix~$A$ is made from columns~\eqref{6b} and similar columns for tetrahedra other than~$1234$ as follows:\\[\smallskipamount]
$A=$ \vadjust{\nobreak\vspace*{-\smallskipamount}}
\begin{align*}
\mbox{\footnotesize $\displaystyle
\begin{pmatrix}0 &\! 0 &\! 0 &\! 0 &\! \xi_{12}^{(1245)} &\! \eta_{12}^{(1245)} &\! -\xi_{12}^{(1235)} &\! -\eta_{12}^{(1235)} &\! \xi_{12}^{(1234)} &\! \eta_{12}^{(1234)}\\[.8ex]
0 &\! 0 &\! -\xi_{13}^{(1345)} &\! -\eta_{13}^{(1345)} &\! 0 &\! 0 &\! -\xi_{13}^{(1235)} &\! -\eta_{13}^{(1235)} &\! \xi_{13}^{(1234)} &\! \eta_{13}^{(1234)}\\[.8ex]
0 &\! 0 &\! -\xi_{14}^{(1345)} &\! -\eta_{14}^{(1345)} &\! \xi_{14}^{(1245)} &\! \eta_{14}^{(1245)} &\! 0 &\! 0 &\! \xi_{14}^{(1234)} &\! \eta_{14}^{(1234)}\\[.8ex]
0 &\! 0 &\! -\xi_{15}^{(1345)} &\! -\eta_{15}^{(1345)} &\! \xi_{15}^{(1245)} &\! \eta_{15}^{(1245)} &\! -\xi_{15}^{(1235)} &\! -\eta_{15}^{(1235)} &\! 0 &\! 0\\[.8ex]
\xi_{23}^{(2345)} &\! \eta_{23}^{(2345)} &\! 0 &\! 0 &\! 0 &\! 0 &\! -\xi_{23}^{(1235)} &\! -\eta_{23}^{(1235)} &\! \xi_{23}^{(1234)} &\! \eta_{23}^{(1234)}\\[.8ex]
\xi_{24}^{(2345)} &\! \eta_{24}^{(2345)} &\! 0 &\! 0 &\! \xi_{24}^{(1245)} &\! \eta_{24}^{(1245)} &\! 0 &\! 0 &\! \xi_{24}^{(1234)} &\! \eta_{24}^{(1234)}\\[.8ex]
\xi_{25}^{(2345)} &\! \eta_{25}^{(2345)} &\! 0 &\! 0 &\! \xi_{25}^{(1245)} &\! \eta_{25}^{(1245)} &\! -\xi_{25}^{(1235)} &\! -\eta_{25}^{(1235)} &\! 0 &\! 0\\[.8ex]
\xi_{34}^{(2345)} &\! \eta_{34}^{(2345)} &\! -\xi_{34}^{(1345)} &\! -\eta_{34}^{(1345)} &\! 0 &\! 0 &\! 0 &\! 0 &\! \xi_{34}^{(1234)} &\! \eta_{34}^{(1234)}\\[.8ex]
\xi_{35}^{(2345)} &\! \eta_{35}^{(2345)} &\! -\xi_{35}^{(1345)} &\! -\eta_{35}^{(1345)} &\! 0 &\! 0 &\! -\xi_{35}^{(1235)} &\! -\eta_{35}^{(1235)} &\! 0 &\! 0\\[.8ex]
\xi_{45}^{(2345)} &\! \eta_{45}^{(2345)} &\! -\xi_{45}^{(1345)} &\! -\eta_{45}^{(1345)} &\! \xi_{45}^{(1245)} &\! \eta_{45}^{(1245)} &\! 0 &\! 0 &\! 0 &\! 0\end{pmatrix}.
$}
\end{align*}
We added the minus signs to the columns corresponding to tetrahedra $1345$ and~$1235$---that is, those whose orientation determined by the increasing order of their vertices does not coincide with the orientation induced from the pentachoron~$12345$. We did it keeping in mind that we will want to glue pentachora together, and to ensure the right agreement between the $t$-components of edge operators belonging to the two pentachora surrounding tetrahedron~$t$, as in~\eqref{1:bg}: the differential parts are the same, even if they vanish in our case, while the ``$x$-parts'' differ in signs.

In order that the space spanned by~$\mathsf e_b$ be five-dimen\-sional, the rank of matrix~$A$ must be~5, so there must exist a nonzero $5\times 5$ minor. Calculating a typical---namely, bottom right---minor yields
\[
\nu_{13}^{2} ( \nu_{12} \nu_{23}+\nu_{13})^{2} ( \nu_{12} \nu_{24}+\nu_{14})  ( \nu_{23} \nu_{34}+\nu_{24})  ( \nu_{23} \nu_{35}+\nu_{25}).
\]
So, we again need condition~\eqref{ne-1} (as, recall, $\nu_{12} \nu_{23}+\nu_{13}=(\omega_{123}+1)\nu_{13}$ and so on).

The space spanned by differential operators~$\mathsf f_b$ can be \emph{defined} now as containing all linear combinations of $\partial/\partial_{x_t}$ and $\partial/\partial_{y_t}$ that \emph{commute} with all~$\mathsf e_b$. This can be also put as follows. Look again at the r.h.s.\ of~\eqref{Axy}, but interpret each $x_t$ and~$y_t$ not as an operator, but simply as a complex variable. The r.h.s.\ of~\eqref{Axy} is made, in this interpretation, of ten linear functions---call them~$\phi_b(x_{2345},\ldots,y_{1234})$---of these variables. Then any~$\mathsf f_b$ must \emph{annihilate} all~$\phi_b(x_{2345},\ldots,y_{1234})$.

The pentachoron weight is now easily seen to be proportional to the product of delta functions:
\begin{equation}\label{dW}
\mathcal W = \const \prod \delta\bigl(\phi_b(x_{2345},\ldots,y_{1234})\bigr),
\end{equation}
taken over any five linearly independent~$\phi_b$.

\subsection{The 3--3 relation}\label{ss:33d}

The algebraic realization of Pachner move 3--3 in terms of delta function weights~\eqref{dW} looks as follows:
\begin{multline}\label{2:WWW}
\iiint \mathcal W_{12345} \mathcal W_{12346} \mathcal W_{12356} \,\mathrm dx_{1234} \,\mathrm dy_{1234} \,\mathrm dx_{1235} \,\mathrm dy_{1235} \,\mathrm dx_{1236} \,\mathrm dy_{1236} \\
 = \const \iiint \mathcal W_{12456} \mathcal W_{13456} \mathcal W_{23456} \,\mathrm dx_{1456} \,\mathrm dy_{1456} \,\mathrm dx_{2456} \,\mathrm dy_{2456} \,\mathrm dx_{3456} \,\mathrm dy_{3456}.
\end{multline}
Here is what must be explained about~\eqref{2:WWW}.

First, a multiplicative cocycle~$\omega$ is given, that is, values~$\omega_s$ for all triangles~$s$ in either side of the Pachner move, satisfying~\eqref{cc}. We assume that~$\omega$ is \emph{generic}, in particular, condition~\eqref{ne-1} holds. Second, the whole construction that has led us to the weight $\mathcal W$~\eqref{dW}, now denoted as~$\mathcal W_{12345}$, is repeated for each weight~$\mathcal W_{ijklm}$ in~\eqref{2:WWW}, with the obvious substitution of indices $12345 \mapsto ijklm$. Third, for our delta function case, as against the relation~\eqref{1:WWW}, there is no need to introduce weights with the wide tildes for `differently oriented' pentachora.

\begin{theorem}\label{th:2:WWW}
The equality~\eqref{2:WWW} holds indeed.
\end{theorem}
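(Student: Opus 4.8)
The plan is to prove Theorem~\ref{th:2:WWW} by the same `conceptual' route that worked for the one-boson case in Section~\ref{s:1P}, namely, via cluster edge operators and a uniqueness statement for quasi-Gaussian weights annihilated by them. First I would set up cluster edge operators exactly as in~\eqref{1:dbu}--\eqref{1:bi}: for each edge~$b$ in the (common) boundary of the two sides of the move, and each boundary tetrahedron $t\supset b$, take the $t$-component of the pentachoron edge operators $\mathsf e_b^{(u)}|_t$ and $\mathsf f_b^{(u)}|_t$ from the unique pentachoron~$u$ containing~$t$, and set the components on inner tetrahedra to zero. The crucial compatibility condition is the analogue of~\eqref{1:bg}: when a tetrahedron~$t$ is shared by two pentachora in one side of the move, the ``$x$/$y$-parts'' of the edge operators must agree and the differential parts must be opposite (here they vanish, so this is automatic). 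This is exactly why in Subsection~\ref{ss:2d} the minus signs were inserted into the columns of~$A$ corresponding to the `wrongly oriented' tetrahedra, and why no wide-tilde weights are needed in~\eqref{2:WWW} --- the differential parts being zero means $\sigma$-conjugation acts trivially. One must also check that the cluster edge operators $\mathsf e_b, \mathsf f_b$ built this way are literally the same for the l.h.s.\ and the r.h.s.\ of the move, which again follows because the construction only uses boundary tetrahedra and the common boundary triangles carry the same $\omega_s$.

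Next I would prove the analogue of Lemma~\ref{l:eo:r}: both sides of~\eqref{2:WWW} are annihilated by every cluster edge operator $\mathsf e_b$ and $\mathsf f_b$. For an edge~$b$ and an inner tetrahedron~$t$ containing it, shared by pentachora $u_1,u_2$ with weights $\mathcal W_1,\mathcal W_2$, one repeats the computation in the proof of Lemma~\ref{l:eo:r} verbatim: acting by the $t$-component of the pentachoron edge operator $d_b^{(u_1)}|_t$ on $\mathcal W_1\mathcal W_2$ and using the annihilation relations~\eqref{ann} for each factor, the $x_t,y_t$-parts cancel between the two pentachora because of the sign convention~\eqref{1:bg}, leaving $\mathsf d_b(\mathcal W_1\mathcal W_2)$ equal to a partial derivative in $x_t$ (or $y_t$) of $\mathcal W_1\mathcal W_2$ times a constant. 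Since the third weight $\mathcal W_3$ depends on neither $x_t,y_t$ nor on any variable in $\mathsf d_b$, the whole triple product becomes a total $x_t$- (or $y_t$-) derivative, and integrating over $x_t,y_t$ kills it by Lemma~\ref{l:eo:k}(ii) (with Lemma~\ref{l:eo:k}(i) used to commute the remaining integrations past the surviving derivatives). In the case where $b$ is contained in two inner tetrahedra of a side, one uses item~\ref{i:kb} of Lemma~\ref{l:k} to combine the two partial-commutator contributions, just as in the one-boson proof; all these facts are stated or immediate from the material already in the excerpt.

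Then I would establish the analogue of Lemma~\ref{l:1:r}: a quasi-Gaussian weight~\eqref{eo:qG} in the eighteen boundary variables (nine boundary tetrahedra, two variables each) annihilated by all the cluster edge operators $\mathsf e_b,\mathsf f_b$ is unique up to a multiplicative constant. For this I would identify, by a (computer-assisted, or carefully organized) linear-algebra computation, the span of the cluster edge operators: it should be an eighteen-dimensional isotropic subspace of the $36$-dimensional symplectic space of the boundary variables, containing a subspace $V_{\mathrm{pure}\,x}$ of operators with no differential part. After a linear change of the boundary variables straightening $V_{\mathrm{pure}\,x}$, the annihilation conditions split into a product of delta functions in the ``pure'' directions times an ordinary first-order Gaussian system in the remaining directions, which pins down the exponential uniquely; this is the same mechanism as in Lemma~\ref{l:1:r}. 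Having verified that the l.h.s.\ and r.h.s.\ of~\eqref{2:WWW} are both quasi-Gaussian (which follows from the items \ref{i:eo:p}--\ref{i:eo:c} discussion in Subsection~\ref{ss:eo:cw}, once one checks that no forbidden integral arises — this should be arranged by the delta-function form~\eqref{dW} and genericity of~$\omega$) and both annihilated by all $\mathsf d_b$, the uniqueness statement forces them to be equal up to the constant, which is exactly~\eqref{2:WWW}.

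The main obstacle I anticipate is the same one flagged in the paper for the one-boson case: verifying that the cluster edge operators on the eighteen boundary variables indeed span a subspace of the right dimension and shape (in particular that $\dim V_{\mathrm{pure}\,x}$ is as needed and that the remaining Gaussian system is nondegenerate for generic~$\omega$), and, relatedly, checking that during the iterated integrations over the inner variables one never hits a delta function with linearly dependent arguments or a divergent $\int\const\,\mathrm dx$. Both are essentially rank computations with the explicit columns~\eqref{6b} (and their index-shifted analogues), so they are in principle routine but genuinely require the hypothesis~\eqref{ne-1}; I would expect to invoke it exactly where factors like $\nu_{12}\nu_{23}+\nu_{13}=(\omega_{123}+1)\nu_{13}$ must be nonzero, as happened repeatedly in Subsections \ref{ss:2g} and~\ref{ss:2d}. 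A secondary point needing care is bookkeeping of orientations and the sign conventions on the wrongly-oriented tetrahedra, to be sure the compatibility~\eqref{1:bg} really holds across every inner tetrahedron of both sides of the move; but since all differential parts vanish here, this reduces to matching the ``$x,y$-parts'' up to sign, which the construction in Subsection~\ref{ss:2d} was explicitly designed to guarantee.
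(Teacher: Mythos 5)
Your proposal is correct and follows essentially the same route as the paper: cluster edge operators built from boundary components as in~\eqref{1:bi}, identical for both sides of the move, annihilating both integrals, plus a uniqueness argument resting on (computer-verified) rank computations. The paper streamlines the final step by noting that both sides are, generically, pure nine-dimensional delta functions and therefore only the cluster edge operators \emph{without} differential part are needed --- one checks they span a nine-dimensional space, which pins the weight down up to a constant --- so your fuller quasi-Gaussian/Lagrangian uniqueness machinery, while sound, is more than is actually required here.
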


\begin{proof}
This can be proved using edge operators, along the same lines as we did in Subsection~\ref{ss:1E}. It can be checked, using computer algebra, that both sides of~\eqref{2:WWW} are, generically, \emph{nine-dimen\-sional} delta functions. Then, we can use only cluster edge operators having \emph{no differential part}, composed from the pentachoron edge operators of the form~\eqref{Axy} in the same way as in~\eqref{1:bi}. Like in Subsection~\ref{ss:1E}, they are the same for the l.h.s.\ and r.h.s.\ of the move. What remains is one more computer calculation to show that they span also a nine-dimen\-sional linear space.
\end{proof}

\section{Discussion of results}\label{s:d}

\subsection{Fermionic and bosonic cases}\label{ss:fb}

We have shown in~\cite{full-nonlinear} that an adequate tool for constructing four-dimen\-sional Pachner move realizations is brought about by studying Gaussian pentachoron weights using isotropic spaces of differential operators annihilating these weights. In~\cite{full-nonlinear,fermionic-1,fermionic-2} we did it for fermionic (Grassmann--Gaussian) weights, while in the present paper, we study the bosonic case.

We see that there are 2-cocycles hidden within both fermionic and bosonic Gaussian pentachoron weights. There appears to be, however, a striking difference between these two cases: the 2-cocycles are \emph{additive} in the fermionic case, while they are \emph{multiplicative} in the bosonic case.

\subsection{Transition to finite characteristic}\label{ss:fc}

Of interest are variants of our bosonic construction for fields of \emph{finite} characteristic. Namely, we can use formal differential operators over such a field~$\mathbb F$, but the pentachoron weight itself may remain \emph{complex} if we use, instead of our usual exponential, a homomorphism from the \emph{additive} group of~$\mathbb F$ into the \emph{multiplicative} group~$\mathbb C^*$.

\subsection{Nontriviality}\label{ss:ntr}

What we can expect to obtain, in the simplest one-boson case, from the construction proposed in Subsection~\ref{ss:fc}, can be seen by comparing our Section~\ref{s:1} with the paper~\cite{kashaev}. Namely, in the light of our results, it is very likely that, essentially, all `discrete Gaussian weights' are equivalent to one another, be they written in the form~\cite[(1)--(4)]{kashaev}, or in our canonical form~\eqref{cF}, or somehow else. Even in this case, however, the calculations in~\cite{kashaev} show the nontriviality of the corresponding four-manifold invariant.

As the two-boson case involves also a 2-cocycle---essentially its cohomology class, as shown in~\cite[Section~7]{fermionic-2} for the fermionic case---and is thus fundamentally richer than the one-boson case, it definitely deserves further studying. Moreover, even in the case of the trivial---identical unity---cocycle, there exist very nontrivial `hexagon cohomologies' for discrete analogues of delta function weights! These cohomologies lead to a very intriguing topological quantum field theory on four-dimen\-sional PL manifolds~\cite{cubic}.

\subsection{Interesting algebraic expression}\label{ss:iae}

Finally, the reader may find interesting our quantity~\eqref{iae}, containing \emph{some}, but not all, terms of the determinant of a symmetric $5\times 5$ matrix~$F$---namely those that correspond to \emph{cyclic} permutations.

\end{document}